\titleformat{\section}[block]{\Large\sc\filcenter}{\thesection.}{5pt}{}
\titleformat{\subsection}[block]{\sc\filcenter}{\thesubsection.}{5pt}{}
\theoremstyle{plain}
\newtheorem{theorem}{Theorem}[section]
\newtheorem{lemma}[theorem]{Lemma}
\newtheorem{problem}[theorem]{Problem}
\theoremstyle{definition}
\newtheorem{definition}[theorem]{Definition}
\theoremstyle{remark}
\newcommand{\ti}[1]{\textit{#1}}
\newcommand{\tbf}[1]{\textbf{#1}}
\newcommand{\mc}[1]{\mathcal{#1}}
\newcommand{\und}[1]{\underline{#1}}
\newcommand{\mf}[1]{\mathfrak{#1}}
\newcommand{\fo}{\ensuremath{\text{FO}}}
\newcommand{\univ}[1]{\ensuremath{\mathsf{U}_{#1}}}
\newcommand{\lt}{{{\L}o{\'s}-Tarski}}
\newcommand{\glt}[1]{\ensuremath{\mathsf{GLT}({#1})}}
\newcommand{\eat}[1]{}
\newcommand{\cl}[1]{\ensuremath{\mathcal{#1}}}
\begin{document}

\title{Revisiting the generalized {\lt}  theorem}

\author{Abhisekh Sankaran}
\date{Department of Computer Science and Technology,\\
University of Cambridge}

\maketitle

\begin{abstract}
  We present a new proof of the generalized {\lt} theorem ($\glt{k}$)
  introduced in~\cite{abhisekh-apal}, over arbitrary
  structures. Instead of using $\lambda$-saturation as
  in~\cite{abhisekh-apal}, we construct just the ``required
  saturation'' directly using ascending chains of structures. We also
  strengthen the failure of $\glt{k}$ in the finite shown
  in~\cite{wollic12-paper}, by strengthening the failure of the {\lt}
  theorem in this context. In particular, we prove that not just
  universal sentences, but for each fixed $k$, even $\Sigma^0_2$
  sentences containing $k$ existential quantifiers fail to capture
  hereditariness in the finite. We conclude with two problems as
  future directions, concerning the {\lt} theorem and $\glt{k}$, both
  in the context of all finite structures.
\end{abstract}

\section{Introduction}\label{section:intro}

Preservation theorems are a class of results from classical model
theory that provide syntactic characterizations of first order (FO)
definable classes of arbitrary structures (structures that could be
finite or infinite), that are closed under given model-theoretic
operations. One of the earliest such results is the {\lt} theorem that
states that a class of arbitrary structures defined by an FO sentence
is hereditary (closed under substructures) if, and only if, it is
definable by a universal sentence (an FO sentence that contains only
universal quantifiers)~\cite{chang-keisler-short}. The theorem in
``dual'' form characterizes extension closed FO definable classes of
arbitrary structures in terms of existential sentences. The theorem
extends to theories (sets of sentences) as well. The {\lt} theorem is
historically important for classical model theory since its proof
constituted the earliest applications of the FO Compactness theorem (a
central result of model theory), and since it triggered off an
extensive study of preservation theorems for various other
model-theoretic operations (homomorphisms, unions of chains, direct
products, etc.), also for logics beyond FO (such as infinitary
logics)~\cite{hodges-history}.

Recently~\cite{abhisekh-apal}, a generalization of the {\lt} theorem
was proven by introducing and characterizing a new semantic property
that generalizes hereditariness in a parameterized manner.  We refer
to this property, called \emph{preservation under substructures modulo
  $k$-cruxes} in~\cite{abhisekh-apal}, as \emph{$k$-hereditariness} in
this paper. A class of structures is said to be $k$-hereditary if
every structure in the class contains a set of at most $k$ elements,
called a \emph{$k$-crux} of the structure, such that all substructures
(of the mentioned structure) \emph{containing the $k$-crux} are also
in the class.  For instance, consider the class of arbitrary graphs
that contain a dominating set of size at most $k$. (A dominating set
$S$ in graph is a set of vertices such that every vertex in the graph
is either in $S$ or adjacent to a vertex in $S$.) This class can be
described by the FO sentence $\varphi := \exists x_1 \ldots \exists
x_k \forall y \big( \bigvee_{i = 1}^{i = k} ((y = x_i) \vee E(y, x_i))
\big)$. In any model of $\varphi$, any witnesses to the existential
quantifiers of $\varphi$ form a dominating set, and any such set is a
$k$-crux of the model; then $\varphi$ defines a $k$-hereditary
class. Observe that $\varphi$ is an $\exists^k \forall^*$ sentence,
i.e. a sentence in prenex normal form whose quantifier prefix is a
string of $k$ existential quantifiers followed by universal
quantifiers\footnote{See~\cite{abhisekh-thesis-summary-short} for a
  variety of graph properties of interest in parameterized algorithms
  and finite model theory, that are $k$-hereditary and expressible as
  $\exists^k \forall^*$ sentences.}. By a similar reasoning as above,
it can be shown that any $\exists^k \forall^*$ sentence defines a
$k$-hereditary class. The authors of~\cite{abhisekh-apal} proved that
the converse is true as well, that any FO definable $k$-hereditary
class of arbitrary structures is always definable by an $\exists^k
\forall^*$ sentence, thus proving a \emph{generalized {\lt} theorem},
that we denote $\glt{k}$ (following~\cite{abhisekh-apal}).  Observe that
the {\lt} theorem is a special case of $\glt{k}$ when $k$ is $0$.

The proof of $\glt{k}$ from~\cite{abhisekh-apal} goes via first
showing $\glt{k}$ over a special class of structures called
\emph{$\lambda$-saturated} structures where $\lambda$ is an infinite
cardinal. These structures, intuitively speaking, realize many types
(maximal consistent sets of formulae in a given number of free
variables); in particular, such a structure $\mf{A}$ realizes all the
types that are realized in all structures elementarily equivalent to
$\mf{A}$, i.e. structures which satisfy the same FO sentences as
$\mf{A}$. Then using the fact that every structure has an elementarily
equivalent extension that is $\lambda$-saturated for some $\lambda$,
the truth of $\glt{k}$ is ``transferred'' to all structures. To show
$\glt{k}$ over $\lambda$-saturated structures, a notion dual to
$k$-hereditariness is introduced, called \emph{preservation under
  $k$-ary covered extensions}, that we call \emph{$k$-extension
  closure} in this paper. Given a structure $\mf{A}$, define a set
$\mc{R}$ of substructures of $\mf{A}$ to be a \emph{$k$-ary cover of
  $\mf{A}$} if every set of $k$ elements of $\mf{A}$ is contained in
some structure of $\mc{R}$. We then say $\mf{A}$ is a $k$-extension of
$\mc{R}$. A class is \emph{$k$-extension closed} if every
$k$-extension of a set of structures of the class, is also in the
class. One sees that a class is $k$-extension closed if, and only if,
its complement is $k$-hereditary. Then $\glt{k}$ is shown by proving
its dual form that characterizes $k$-extension closure in terms of
$\forall^k \exists^*$ sentences. The heart of this proof -- Lemma 4.2
of~\cite{abhisekh-apal} -- shows that if $\Gamma$ is the theory of the
$\forall^k \exists^*$ implications of a sentence $\varphi$ that
defines a $k$-extension closed class, then every $\lambda$-saturated
model of $\Gamma$ has a $k$-ary cover consisting of the models of
$\varphi$. It follows that the $\lambda$-saturated model then itself
models $\varphi$, showing that $\varphi$ and $\Gamma$ are equivalent;
then one application of Compactness theorem shows $\varphi$ to be
equivalent to a single sentence of $\Gamma$.

The first result of this paper is motivated by the above proof
of~\cite{abhisekh-apal}. In particular, we give a new proof of
$\glt{k}$ that completely avoids using $\lambda$-saturated structures,
by making the key observation that the full power of
$\lambda$-saturation is hardly used in the proof of the mentioned
Lemma 4.2 of~\cite{abhisekh-apal}.  The formulae that play a central
role in the proof are not arbitrary FO formulae, but are in fact
formulae that have only one quantifier alternation at best.  We
therefore construct just the ``required saturation'' as is needed for
our proof, by showing a ``weaker'' version of the mentioned Lemma 4.2,
that states that for $\Gamma$ and $\varphi$ as above, every model of
$\Gamma$ has an elementarily equivalent extension that might not be
$\lambda$-saturated for any $\lambda$, but still contains a $k$-ary
cover consisting of models of $\varphi$; see
($\ref{lemma:key-lemma-cond-1}) \rightarrow
(\ref{lemma:key-lemma-cond-3}$) of
Lemma~\ref{lemma:elem-ext-containing-k-ary-cover} of this paper. Then
showing (the dual form of) $\glt{k}$ over the class of the mentioned
elementary extensions is sufficient to transfer $\glt{k}$ out to all
structures. The aforementioned implication is in turn shown by
defining in the natural way, the more general notion of a $k$-ary
cover of a structure \emph{in} a superstructure of it, and then using
(transfinite) induction over the $k$-tuples of elements of a given
model $\mf{A}$ of $\Gamma$, to construct an elementary extension
$\mf{A}'$ of $\mf{A}$ such that $\mf{A}$ has a $k$-ary cover
consisting of models of $\varphi$ \emph{in} $\mf{A}'$; see
($\ref{lemma:key-lemma-cond-1}) \rightarrow
(\ref{lemma:key-lemma-cond-2}$) of
Lemma~\ref{lemma:elem-ext-containing-k-ary-cover}.  Applying this
implication iteratively to the elementary extensions it gives, we get
a chain of structures whose union is an elementary extension of
$\mf{A}$ that has a (self-contained) $k$-ary cover of models of
$\varphi$; see ($\ref{lemma:key-lemma-cond-2}) \rightarrow
(\ref{lemma:key-lemma-cond-3}$) of
Lemma~\ref{lemma:elem-ext-containing-k-ary-cover}.  Our new proof is
therefore much ``from the scratch'' as opposed to the proof
in~\cite{abhisekh-apal} which uses established notions of model
theory.

The second result of this paper is a strengthening of the failure of
the {\lt} theorem in the finite. In the research programme of
investigating classical model theoretic results over all finite
structures, that was actively pursued from the mid 80s to mid 90s in
the context of finite model theory~\cite{gurevich-alechina}, one of
the first results identified to fail was the {\lt} theorem. (In fact,
Tait had already shown this failure of the {\lt} theorem in
1959~\cite{tait}.)  Specifically, there is an FO sentence that is
hereditary over the class of all finite structures, but that is not
equivalent over this class to any universal sentence. In the spirit
of~\cite{gurevich-alechina}, one can ask if there is a different
syntactic characterization of hereditariness in the finite, or even a
syntactic (proper) subfragment of FO that is expressive enough to
contain (upto equivalence) all FO sentences that are hereditary when
restricted to the finite. We show in
Theorem~\ref{prop:strong-failure-of-LT-in-the-finite} that for no
fixed $k$, is the class of $\exists^k \forall^*$ sentences such a
subfragment: there is a sentence $\varphi_k$ whose models (even
including infinite ones) form a hereditary class, and yet $\varphi_k$
is not equivalent over all finite structures to any $\exists^k
\forall^*$ sentence.

This result also strengthens the failure of $\glt{k}$ in the finite as
shown in~\cite{wollic12-paper}. For every $k$, the authors
of~\cite{wollic12-paper} constructed a counterexample to $\glt{k}$
(over all finite structures) that is $k$-hereditary but not
$(k-1)$-hereditary. The sentence $\varphi_k$ given by our
Theorem~\ref{prop:strong-failure-of-LT-in-the-finite} provides a
counterexample to $\glt{k}$, that is $l$-hereditary for all $l$.  The
proof of Theorem~\ref{prop:strong-failure-of-LT-in-the-finite} is
based on the Ehrenfeucht-Fr\"aiss\'e (EF) method for showing
inexpressibility results in FO. We construct for each $n$, a model
$\mf{A}_n$ and a non-model $\mf{B}_n$ of $\varphi_k$, such that every
$\exists^k \forall^*$ sentence that is true in $\mf{A}_n$ is also true
in $\mf{B}_n$. The latter is shown by essentially showing that the
Duplicator has a winning strategy in a version of the EF game adapted
to $\exists^k \forall^*$ sentences, in which the Spoiler picks up $k$
elements from $\mf{A}_n$ in the first move, and $n$ elements from
$\mf{B}_n$ in the next move. Interestingly, the sentence $\varphi_k$
itself turns out to be equivalent to an $\exists^{k+1} \forall^*$
sentence.

\tbf{Paper Organization:} In Section~\ref{section:background}, we
introduce terminology and notation used in the paper, and formally
state $\glt{k}$. In Section~\ref{section:glt-k-new-proof}, we present
our new proof of $\glt{k}$ and in
Section~\ref{section:los-tarski-failure}, we prove the strengthened
failure of the {\lt} theorem in the finite. We conclude in
Section~\ref{section:conclusion} by presenting two problems for future
investigation, one concerning the {\lt} theorem and the other
concerning $\glt{k}$, both in the context of all finite structures.

\section{Preliminaries and background}\label{section:background}

We assume the reader is familiar with standard notation and
terminology used in the syntax and semantics of
\text{FO}~\cite{chang-keisler-short}. A \ti{vocabulary} $\tau$ is a
set of predicate, function and constant symbols.  In this paper, we
will always be concerned with arbitrary \emph{finite} vocabularies,
unless explicitly stated otherwise. We denote by $\text{FO}(\tau)$ the
set of all FO formulae over vocabulary $\tau$. A sequence $(x_1,
\ldots, x_k)$ of variables is denoted by $\bar{x}$. A formula $\psi$
whose free variables are among $\bar{x}$, is denoted by
$\psi(\bar{x})$.  A formula with no free variables is called a
\emph{sentence}. An \emph{$\fo(\tau)$ theory} is a set of $\fo(\tau)$
sentences. An $\fo(\tau)$ theory \emph{with free variables $\bar{x}$}
is a set of $\fo(\tau)$ formulae, all of whose free variables are
among $\bar{x}$. When $\tau$ is clear from context, we call an
$\fo(\tau)$ theory, a theory simply. We denote by $\mathbb{N}$, the
natural numbers \emph{including zero}.  We abbreviate a block of
quantifiers of the form $Q x_1 \ldots Q x_k$ by $Q^k \bar{x}$ or $Q
\bar{x}$ (depending on what is better suited for understanding), where
$Q \in \{\forall, \exists\}$ and $k \in \mathbb{N}$.  By $Q^*$, we
mean a block of $k$ $Q$ quantifiers, for some $k \in \mathbb{N}$.  For
every non-zero $n \in \mathbb{N}$, we denote by $\Sigma^0_n$ and
$\Pi^0_n$, the classes of all {\fo} sentences in prenex normal form,
whose quantifier prefixes begin with $\exists$ and $\forall$
respectively, and consist of $n-1$ alternations of quantifiers. We
call $\Sigma^0_1$ formulae \emph{existential} and $\Pi^0_1$ formulae
\emph{universal}.  We call $\Sigma^0_2$ formulae with $k$ existential
quantifiers \emph{$\exists^k \forall^*$ formulae}, and $\Pi^0_2$
formulae with $k$ universal quantifiers \emph{$\forall^k \exists^*$
  formulae}.

We use standard notions of $\tau$-structures (denoted $\mf{A}, \mf{B}$
etc.; we refer to these simply as structures when $\tau$ is clear from
context), substructures (denoted $\mf{A} \subseteq \mf{B}$),
extensions, isomorphisms (denoted $\mf{A} \cong \mf{B}$), isomorphic
embeddings (denoted $\mf{A} \hookrightarrow \mf{B}$), elementary
equivalence (denoted $\mf{A} \equiv \mf{B}$), elementary substructures
(denoted $\mf{A} \preceq \mf{B}$) and elementary extensions, as
defined in ~\cite{chang-keisler-short}. Given a structure
$\mathfrak{A}$, we use $\mathsf{U}_{\mathfrak{A}}$ to denote the
universe of $\mathfrak{A}$, and $|\mf{A}|$ to denote the size (or
\ti{power}) of $\mf{A}$ which is the cardinality of
$\mathsf{U}_{\mathfrak{A}}$.  For an {\fo} sentence $\varphi$ and an
   {\fo} theory $T$, we denote by $\mathfrak{A} \models \varphi$ and
   $\mf{A} \models T$ that $\mf{A}$ is a model of $\varphi$ and $T$
   respectively. In Section~\ref{section:glt-k-new-proof} of the
   paper, we consider structures that could be finite or infinite,
   whereas in Section~\ref{section:los-tarski-failure} we restrict
   ourselves to only finite structures.

Finally, we use standard abbreviations of English phrases that
commonly appear in mathematical literature. Specifically, `w.l.o.g'
stands for `without loss of generality', `iff' stands for `if and only
if', and `resp.'  stands for `respectively'.

\subsection{The generalized {\lt} theorem}

We recall the notions of preservation under substructures modulo
$k$-cruxes, $k$-ary covered extensions and preservation under $k$-ary
covered extensions introduced in~\cite{abhisekh-apal}, that we
resp. call in this paper $k$-hereditariness, $k$-extensions and
$k$-extension closure. These notions for $k = 0$ correspond exactly to
hereditariness, extensions and extension closure resp.

\begin{definition}[Definition 3.1,~\cite{abhisekh-apal}]\label{defn:PSC(k)}\leavevmode
  \begin{enumerate}[nosep]
    \item[a.] Let $\cl{U}$ be a class of arbitrary structures and $k
      \in \mathbb{N}$. A subclass $\cl{S}$ of $\cl{U}$ is said to be
      \emph{$k$-hereditary over $\cl{U}$}, if for every structure
      $\mf{A}$ of $\cl{S}$, there is a set $C \subseteq
      \mathsf{U}_{\mf{A}}$ of size $\leq k$ such that if $\mf{B}
      \subseteq \mf{A}$, $\mf{B}$ contains $C$ and $\mf{B} \in
      \cl{U}$, then $\mf{B} \in \cl{S}$. The set $C$ is called a
      \emph{$k$-crux} of $\mf{A}$ w.r.t. $\cl{S}$ over $\cl{U}$.
    \item[b.] Given theories $T$ and $V$, we say $T$ is
      \emph{$k$-hereditary modulo $V$}, if the class of models of $T
      \cup V$ is $k$-hereditary over the class of models of $V$. A
      sentence $\varphi$ is $k$-hereditary modulo $V$ if the theory
      $\{\varphi\}$ is $k$-hereditary modulo $V$.
  \end{enumerate}
\end{definition}

\begin{definition}[Definitions 3.5 and 3.8,~\cite{abhisekh-apal}]\label{defn:PCE(k)}\leavevmode
  \begin{enumerate}[nosep]
    \item[a.] Given a structure $\mf{A}$, a non-empty collection
      $\mc{R}$ of substructures of $\mf{A}$ is said to be a
      \emph{$k$-ary cover} of $\mf{A}$ if for every set $C \subseteq
      \mathsf{U}_{\mf{A}}$ of size $\leq k$, there is a structure in
      $\mc{R}$ that contains $C$.  We call $\mf{A}$ a
      \emph{$k$-extension} of $\mc{R}$.

    \item[b.] For a class $\cl{U}$ of arbitrary structures and $k
      \in \mathbb{N}$, a subclass $\cl{S}$ of $\cl{U}$ is said to be
      \emph{$k$-extension closed over $\cl{U}$} if for every
      collection $\mc{R}$ of structures of $\cl{S}$, if $\mf{A}$ is a
      $k$-extension of $\mc{R}$ and $\mf{A} \in \cl{U}$, then $\mf{A}
      \in \cl{S}$.
    \item[c.] Given theories $V$ and $T$, we say $T$ is
      \emph{$k$-extension closed modulo $V$} if the class of models of
      $T \cup V$ is $k$-extension closed over the class of models of
      $V$. A sentence $\varphi$ is $k$-extension closed modulo $V$ if
      the theory $\{\varphi\}$ is $k$-extension closed modulo $V$.
  \end{enumerate}
\end{definition}

We extend the above definitions slightly to formulae and theories with
free variables. Given a vocabulary $\tau$, let $\tau_n$ denote the
vocabulary obtained by expanding $\tau$ with $n$ fresh and distinct
constant symbols $c_1, \ldots, c_n$. For a given $\fo(\tau)$ theory
$T(x_1, \ldots, x_n)$, let $T'$ denote the $\fo(\tau_n)$ theory
(without free variables) obtained by substituting $c_i$ for $x_i$ in
$T(x_1, \ldots, x_n)$ for each $i \in \{1, \ldots, n\}$. Then we say
$T(x_1, \ldots, x_n)$ is $k$-hereditary, resp. $k$-extension closed,
modulo a theory $V$ if $T'$ is $k$-hereditary, resp. $k$-extension
closed, modulo $V$ where $V$ is seen as an $\fo(\tau_n)$ theory. A
formula $\varphi(x_1, \ldots, x_n)$ is $k$-hereditary,
resp. $k$-extension closed, modulo $V$ if the theory $\{\varphi(x_1,
\ldots, x_n)\}$ is $k$-hereditary, resp. $k$-extension closed, modulo
$V$.  The following lemma establishes the duality of the introduced
preservation properties.

\begin{lemma}[Lemma 3.9,~\cite{abhisekh-apal}]\label{lemma:duality} 
  Let $\cl{U}$ be a class of arbitrary structures, $\cl{S}$ be a
  subclass of $\cl{U}$ and $\overline{\cl{S}}$ be the complement of
  $\cl{S}$ in $\cl{U}$. Then $\cl{S}$ is $k$-hereditary over $\cl{U}$
  iff $\overline{\cl{S}}$ is $k$-extension closed over $\cl{U}$, for
  each $k \in \mathbb{N}$. In particular, if $\cl{U}$ is defined by a
  theory $V$, then a formula $\varphi(\bar{x})$ is $k$-hereditary
  modulo $V$ iff $\neg \varphi(\bar{x})$ is $k$-extension closed
  modulo $V$.
\end{lemma}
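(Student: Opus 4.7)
The plan is to establish the biconditional via two contrapositive arguments that simply unfold the definitions, and then deduce the formula version by a routine passage to the expanded vocabulary $\tau_n$. For the forward direction, I would assume $\cl{S}$ is $k$-hereditary over $\cl{U}$ and suppose toward a contradiction that $\overline{\cl{S}}$ fails to be $k$-extension closed over $\cl{U}$. Such a failure supplies a family $\mc{R} \subseteq \overline{\cl{S}}$ whose $k$-extension $\mf{A}$ lies in $\cl{U}$ but not in $\overline{\cl{S}}$, so $\mf{A} \in \cl{S}$. Picking a $k$-crux $C$ of $\mf{A}$ w.r.t.\ $\cl{S}$, the covering property of $\mc{R}$ yields some $\mf{B} \in \mc{R}$ with $C \subseteq \univ{\mf{B}}$ and $\mf{B} \subseteq \mf{A}$; since $\mf{B} \in \overline{\cl{S}} \subseteq \cl{U}$, the defining clause of a $k$-crux forces $\mf{B} \in \cl{S}$, contradicting $\mf{B} \in \overline{\cl{S}}$.

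For the converse, I would assume $\overline{\cl{S}}$ is $k$-extension closed over $\cl{U}$ and suppose some $\mf{A} \in \cl{S}$ admits no $k$-crux with respect to $\cl{S}$. Then for each $C \subseteq \univ{\mf{A}}$ with $|C| \leq k$, the negation of the $k$-crux clause selects (by the axiom of choice) a substructure $\mf{B}_C \subseteq \mf{A}$ in $\cl{U}$ with $C \subseteq \univ{\mf{B}_C}$ and $\mf{B}_C \in \overline{\cl{S}}$. The resulting family $\mc{R} := \{\mf{B}_C : C \subseteq \univ{\mf{A}},\ |C| \leq k\}$ is, by construction, a $k$-ary cover of $\mf{A}$ lying entirely in $\overline{\cl{S}}$, so the assumed $k$-extension closure yields $\mf{A} \in \overline{\cl{S}}$, contradicting $\mf{A} \in \cl{S}$.

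For the ``in particular'' clause, I would replace the free variables $x_1, \ldots, x_n$ of $\varphi(\bar{x})$ by fresh constants $c_1, \ldots, c_n$ to obtain the $\fo(\tau_n)$ sentence $\varphi'$; since every $\tau_n$-substructure automatically contains the interpretations of these constants, the $k$-crux and $k$-ary cover definitions transfer to $\tau_n$-structures without change. Applying the first part with $\cl{U}$ taken as the class of $\tau_n$-models of $V$ and $\cl{S}$ as the class of $\tau_n$-models of $V \cup \{\varphi'\}$—whose complement in $\cl{U}$ is exactly the class of models of $V \cup \{\neg \varphi'\}$—then delivers the stated equivalence for formulas. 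I foresee no real obstacle: both directions are direct definition chases, and the only step beyond routine bookkeeping is the appeal to choice when building $\mc{R}$ in the converse argument.
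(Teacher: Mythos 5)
Your proof is correct. Note that the paper itself does not prove this lemma --- it is imported verbatim as Lemma 3.9 of the cited reference --- so there is no in-paper argument to compare against; your two contrapositive definition-unfoldings (using a $k$-crux of the $k$-extension to contradict membership of some cover element in $\overline{\cl{S}}$, and conversely assembling the witnesses to the failure of every candidate crux into a $k$-ary cover lying in $\overline{\cl{S}}$) are exactly the natural argument, and your reduction of the formula version to the sentence version over $\tau_n$ matches the paper's own definitional setup. The only points worth noting are minor: the choice needed to build $\mc{R} = \{\mf{B}_C\}$ is set-level since all the $\mf{B}_C$ are substructures of the fixed $\mf{A}$, and $\mc{R}$ is non-empty as required by Definition~\ref{defn:PCE(k)} because $C = \emptyset$ already contributes a member.
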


We now recall $\glt{k}$ as proved in~\cite{abhisekh-apal}. This
theorem gives syntactic characterizations of FO definable
$k$-hereditary and $k$-extension closed classes of structures. Observe
that the case of $k = 0$ gives exactly the {\lt} theorem. Below, for
$\fo(\tau)$ formulae $\varphi(\bar{x})$ and $\psi(\bar{x})$ where
$\bar{x} = (x_1, \ldots, x_n)$, we say $\varphi(\bar{x})$ is
equivalent to $\psi(\bar{x})$ \emph{modulo $V$} if for every
$\tau$-structure $\mf{A}$ and every $n$-tuple $\bar{a}$ from $\mf{A}$,
we have $(\mf{A}, \bar{a})$ is a model of $\{\varphi(\bar{x})\} \cup
V$ iff it is a model of $\{\psi(\bar{x})\} \cup V$.

\begin{theorem}[Generalized {\lt} theorem: $\glt{k}$; Corollaries 4.4 and 4.6, ~\cite{abhisekh-apal}]\label{theorem:glt-k}
  Let $\varphi(\bar{x})$ and $V$ be a given formula and theory
  respectively, and $k \in \mathbb{N}$. Then the following are true:
  \begin{enumerate}[nosep]
    \item The formula $\varphi(\bar{x})$ is $k$-hereditary modulo $V$
      iff it is equivalent modulo $V$ to an $\exists^k \forall^*$
      formula whose free variables are among
      $\bar{x}$.\label{theorem:glt-k-subs}
    \item The formula $\varphi(\bar{x})$ is $k$-extension closed
      modulo $V$ iff it is equivalent modulo $V$ to a $\forall^k
      \exists^*$ formula whose free variables are among
      $\bar{x}$.\label{theorem:glt-k-ext}
  \end{enumerate}
\end{theorem}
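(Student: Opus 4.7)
I will prove part~\ref{theorem:glt-k-ext} directly and derive part~\ref{theorem:glt-k-subs} via Lemma~\ref{lemma:duality}. After replacing free variables by fresh constants as described before the theorem statement, I may assume $\varphi$ is a sentence. The easy direction of part~\ref{theorem:glt-k-ext} is routine: if $\varphi$ is equivalent modulo $V$ to $\forall \bar{x}\, \exists \bar{y}\, \psi(\bar{x}, \bar{y})$ with $|\bar{x}| = k$, and $\mf{A}$ is a $k$-extension of a family $\mc{R}$ of $\varphi$-models, then any $k$-tuple $\bar{a}$ from $\mf{A}$ lies in some $\mf{B} \in \mc{R}$, and the $\bar{y}$-witnesses found in $\mf{B}$ for $\bar{a}$ remain in $\mf{A}$ because $\mf{B} \subseteq \mf{A}$, so $\mf{A} \models \varphi$. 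For the hard direction, let $\Gamma$ denote the set of all $\forall^k \exists^*$ sentences implied by $\{\varphi\} \cup V$. I will prove $\Gamma \cup V \models \varphi$; one application of the Compactness Theorem then yields a single $\forall^k \exists^*$ sentence from $\Gamma$ equivalent to $\varphi$ modulo $V$.

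\textbf{Key lemma and chain construction.} The central step is to show that every model $\mf{A}$ of $\Gamma \cup V$ has an elementary extension $\mf{A}^* \succeq \mf{A}$ carrying a $k$-ary cover $\mc{R}$ by substructures that model $\varphi \cup V$; then $k$-extension closure modulo $V$ gives $\mf{A}^* \models \varphi$, hence $\mf{A} \models \varphi$ by elementary equivalence. I build $\mf{A}^*$ in three nested stages. Stage (i), the single-tuple step, takes a fixed $k$-tuple $\bar{a}$ from $\mf{A}$ and produces an elementary extension $\mf{A}' \succeq \mf{A}$ together with a substructure $\mf{B} \subseteq \mf{A}'$ that contains $\bar{a}$ and satisfies $\varphi \cup V$. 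Stage (ii) performs a transfinite induction along an enumeration of the $k$-tuples of $\mf{A}$, iterating (i) to build a single elementary extension $\mf{A}^\dagger$ in which every $k$-tuple of the original $\mf{A}$ lies in some $\varphi$-substructure of $\mf{A}^\dagger$---the relativised notion of a $k$-ary cover of $\mf{A}$ \emph{in} the superstructure $\mf{A}^\dagger$, alluded to in the introduction. Stage (iii) $\omega$-iterates (ii)---covering tuples of $\mf{A}$ in $\mf{A}^\dagger$, then tuples of $\mf{A}^\dagger$ in a further extension, and so on---and takes $\mf{A}^*$ to be the union of the resulting chain of elementary extensions, which is elementary over $\mf{A}$ by the Tarski--Vaught chain principle and carries, by construction, a self-contained $k$-ary cover by $\varphi$-models.

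\textbf{The main obstacle.} The technical heart of the proof lies in stage (i). I would argue it as a consistency claim: expand the vocabulary by constants $\bar{c} = c_1, \ldots, c_k$ for $\bar{a}$ and a fresh unary predicate $P$, and form the theory
\[
T \;=\; \eldiag{\mf{A}} \,\cup\, \{P(c_1), \ldots, P(c_k)\} \,\cup\, \{\varphi^P\} \,\cup\, V^P,
\]
where $\varphi^P$ and $V^P$ denote the relativisations to $P$, supplemented by axioms expressing that $P$ is closed under the function symbols of $\tau$ and hence cuts out a substructure. A model of $T$ supplies the required $\mf{A}'$ (as the reduct omitting $P$) and $\mf{B}$ (as the $P$-part). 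By Compactness, consistency of $T$ reduces to satisfiability of each finite fragment, which collapses to a single formula $\chi(\bar{c}, \bar{d})$ from $\eldiag{\mf{A}}$ conjoined with $\bigwedge_i P(c_i)$, $\varphi^P$ and a finite portion of $V^P$. The crux is that unsatisfiability of such a fragment distils into an FO sentence of the shape $\forall^k \exists^*$---asserting that no $k$-tuple in any $\varphi$-substructure can be extended to a larger structure witnessing $\exists \bar{d}\, \chi(\bar{c}, \bar{d})$---which is then a consequence of $\varphi \cup V$, hence lies in $\Gamma$, and must therefore hold in $\mf{A}$, contradicting the intended realisation $\bar{c} \mapsto \bar{a}$. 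Verifying that the extracted sentence has exactly one quantifier alternation with a $k$-element universal block---so that it truly lies in $\Gamma$---is the delicate point, and it is precisely here that the argument exploits only the $\forall^k \exists^*$-fragment of consequences of $\varphi$ rather than full $\lambda$-saturation of $\mf{A}$, replacing the saturation technology of~\cite{abhisekh-apal} with the concrete three-stage chain construction above.
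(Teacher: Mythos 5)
Your overall architecture coincides with the paper's: reduce to the extensional statement, set $\Gamma$ to be the $\forall^k\exists^*$ consequences of $\{\varphi\}\cup V$, prove that every model of $V\cup\Gamma$ has an elementary extension carrying a $k$-ary cover by models of $\{\varphi\}\cup V$ via exactly your three stages (single-tuple step, transfinite induction over $k$-tuples, $\omega$-iteration and Tarski--Vaught), and finish with Compactness and duality. The problem is that your stage (i) --- the analogue of the paper's Lemma~\ref{lemma:auxiliary} --- does not go through as you describe it, and you have in fact put your finger on the exact spot where it breaks. If the theory $\eldiag{\mf{A}}\cup\{P(c_1),\dots,P(c_k)\}\cup\{\varphi^P\}\cup V^P$ is inconsistent, Compactness hands you a single formula $\chi(\bar c,\bar d)$ from $\eldiag{\mf{A}}$ with $\models \bigl(\textstyle\bigwedge_i P(c_i)\wedge\varphi^P\wedge V_0^P\bigr)\rightarrow \forall\bar y\,\neg\chi(\bar c,\bar y)$. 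To contradict $\mf{A}\models\Gamma$ you must convert this into a $\tau$-sentence that lies in $\Gamma$, i.e.\ a consequence of $\{\varphi\}\cup V$ of the form $\forall^k\bar x\,\exists^*(\cdot)$ that fails in $\mf{A}$. Instantiating $P$ as the whole universe gives $\{\varphi\}\cup V\models\forall\bar x\,\forall\bar y\,\neg\chi(\bar x,\bar y)$, but $\chi$ is an arbitrary formula of the elementary diagram, so $\forall\bar y\,\neg\chi(\bar x,\bar y)$ has unbounded quantifier complexity, and the auxiliary constants $\bar d$ universalize in the wrong direction: you get extra universal quantifiers where $\Gamma$ demands existential ones. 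Tracing through which fragments of $\eldiag{\mf{A}}$ your argument can actually absorb, you find only the universal formulas about $\bar a$ alone --- that is, the $\Pi^0_1$-type of $\bar a$ --- and the rest of the diagram is left unaccounted for.

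The paper resolves this by splitting your single Compactness argument into two steps. First it shows that $V\cup T\cup \mathsf{tp}_{\Pi,\mf{A},\bar a}(\bar x)$ is consistent: here an inconsistency yields a single \emph{universal} $\psi(\bar x)$ true of $\bar a$ with $(V\cup T)\rightarrow\neg\psi(\bar x)$, and since $\neg\psi$ is existential, $\forall^k\bar x\,\neg\psi(\bar x)$ genuinely lies in $\Gamma$ and contradicts $\mf{A}\models\Gamma$. Second, given a model $(\mf{D},\bar d)$ of that theory, the elementarity of the extension is obtained not by adding $\eldiag{\mf{A}}$ to the same theory but by a separate appeal to the existential amalgamation lemma (Lemma~\ref{lemma:exis-amalgam}, Corollary 5.4.2 of Chang--Keisler): every existential sentence true in $(\mf{D},\bar d)$ is true in $(\mf{A},\bar a)$ because $\mf{D}$ realizes the $\Pi^0_1$-type of $\bar a$, so $(\mf{D},\bar d)$ embeds into an elementary extension of $(\mf{A}',\bar a)$, and the image of $\mf{D}$ is the desired substructure $\mf{B}$. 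So your proposal needs to be repaired by replacing the relativised-diagram theory with the two-step ``realize the $\Pi^0_1$-type, then amalgamate'' argument; stages (ii) and (iii) can then stand essentially as you wrote them (modulo stating stage (i) relative to an intermediate elementary extension $\mf{A}'$ so it can be iterated, and treating finite $\mf{A}$ separately, as the paper does).
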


\section{A new proof of $\glt{k}$}\label{section:glt-k-new-proof}

We give a new proof to a more general result than
Theorem~\ref{theorem:glt-k}, from~\cite{abhisekh-apal}. This result is
a generalization of the ``extensional'' version of $\glt{k}$ to
theories.  We extend in the natural way the aforestated notion of
equivalence modulo a theory, of formulae, to theories with free
variables.

\begin{theorem}\label{theorem:glt-k-extensional-version-for-theories}
A theory $T(\bar{x})$ is $k$-extension closed modulo a theory $V$ if,
and only if, $T(\bar{x})$ is equivalent modulo $V$ to a theory
(consisting) of $\forall^k \exists^*$ formulae all of whose free
variables are among $\bar{x}$.
\end{theorem}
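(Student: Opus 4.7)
The plan is to prove both directions, with the real work being the forward one.

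For the easy direction $(\Leftarrow)$: if $T(\bar{x})$ is equivalent modulo $V$ to a theory of $\forall^k \exists^*$ formulae, I would verify directly that each $\forall^k \exists^*$ formula is $k$-extension closed modulo $V$ (in any $k$-extension, the $k$ universally-quantified witnesses lie within some member of the cover, where the existential witnesses are then available), and note that an intersection of $k$-extension closed subclasses of the models of $V$ is again $k$-extension closed. Hence $T(\bar{x})$ itself is $k$-extension closed modulo $V$.

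For the substantive direction $(\Rightarrow)$: expanding $\tau$ with fresh constants $\bar{c}$ for $\bar{x}$ reduces matters to the sentence case. I would then define
\[
\Gamma \;=\; \{\, \gamma(\bar{x}) : \gamma \text{ is } \forall^k \exists^*, \text{ free variables among } \bar{x}, \text{ and } T \cup V \models \gamma\,\}.
\]
Trivially $T \cup V \models \Gamma$, so the task is the converse $\Gamma \cup V \models T$. Fixing any model $(\mf{A}, \bar{a})$ of $\Gamma \cup V$, my aim would be to build an elementary extension $\mf{A}' \succeq \mf{A}$ that is itself a $k$-extension of substructures each modelling $T \cup V$. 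Since $T$ is $k$-extension closed modulo $V$ and $\mf{A}' \models V$, this will force $\mf{A}' \models T$, whence $(\mf{A}, \bar{a}) \models T$ by $\mf{A} \preceq \mf{A}'$.

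Producing $\mf{A}'$ is a theory-level analogue of Lemma~\ref{lemma:elem-ext-containing-k-ary-cover}, which I would derive from its single-sentence form by a compactness argument. The pivotal step is: for every $k$-tuple $\bar{b}$ from $\mf{A}$, one needs an elementary extension $\mf{A}_{\bar{b}} \succeq \mf{A}$ together with a substructure $\mf{B}_{\bar{b}} \subseteq \mf{A}_{\bar{b}}$ such that $\bar{b} \subseteq \univ{\mf{B}_{\bar{b}}}$ and $\mf{B}_{\bar{b}} \models T \cup V$. I would introduce a fresh unary predicate $P$ and consider
\[
\Sigma_{\bar{b}} \;=\; \eldiag{\mf{A}} \;\cup\; \{P(a) : a \in \bar{a} \cup \bar{b}\} \;\cup\; \{\text{axioms that } P \text{ is closed under the vocabulary}\} \;\cup\; T^P \cup V^P,
\]
where $T^P, V^P$ relativize the quantifiers of $T$ and $V$ to $P$. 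Any finite $\Sigma_0 \subseteq \Sigma_{\bar{b}}$ involves only finite subsets $T_0 \subseteq T$ and $V_0 \subseteq V$; setting $\varphi_0 = \bigwedge T_0 \wedge \bigwedge V_0$, every $\forall^k \exists^*$ consequence of $\varphi_0$ is also a consequence of $T \cup V$ and hence lies in $\Gamma$, so $\mf{A}$ satisfies the $\forall^k \exists^*$ theory of $\varphi_0$. Invoking the single-sentence form of Lemma~\ref{lemma:elem-ext-containing-k-ary-cover} on $\varphi_0$ then yields an elementary extension of $\mf{A}$ with a $k$-ary cover by models of $\varphi_0$, one member of which contains $\bar{b}$ and realizes $\Sigma_0$ when $P$ is interpreted as that member. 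Compactness then supplies a model of the full $\Sigma_{\bar{b}}$, yielding $\mf{A}_{\bar{b}}$ and $\mf{B}_{\bar{b}}$.

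With such pairs in hand for every $\bar{b}$, I would stitch them together by transfinite induction exactly as in the passage from (\ref{lemma:key-lemma-cond-2}) to (\ref{lemma:key-lemma-cond-3}) of Lemma~\ref{lemma:elem-ext-containing-k-ary-cover}: iterate through all $k$-tuples of the growing universe, taking elementary extensions at successor stages and unions at limit stages. The resulting union $\mf{A}'$ is an elementary extension of $\mf{A}$ with a self-contained $k$-ary cover of models of $T \cup V$, completing the argument. The hard part will be arranging the compactness step so that every finite fragment is dischargeable by the single-sentence lemma; once that is set up, the chain construction is already carried out in the paper and can be cited essentially verbatim.
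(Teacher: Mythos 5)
Your proposal is correct in outline and its overall architecture coincides with the paper's: reduce to sentences via fresh constants, define $\Gamma$ as the set of $\forall^k\exists^*$ consequences of $T$ modulo $V$, show that every model of $V\cup\Gamma$ has an elementary extension carrying a self-contained $k$-ary cover by models of $V\cup T$, and then let $k$-extension closure and elementarity finish the job. Where you genuinely diverge is in how the pivotal one-tuple-at-a-time step is obtained. The paper's Lemma~\ref{lemma:elem-ext-containing-k-ary-cover} is already stated and proved for arbitrary theories $V$ and $T$; its engine, Lemma~\ref{lemma:auxiliary}, shows directly that $V\cup T\cup\{\text{the }\Pi^0_1\text{-type of }\bar{a}\text{ in }\mf{A}\}$ is satisfiable (an inconsistency would place $\forall^k\bar{x}\,\neg\psi(\bar{x})$ in $\Gamma$ for some $\psi$ in the type, contradicting $\mf{A}\models\Gamma$) and then embeds a model of it into an elementary extension via the existential amalgamation theorem (Lemma~\ref{lemma:exis-amalgam}). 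You instead presuppose a single-sentence form of the covering lemma and lift it to theories by relativizing $T$ and $V$ to a fresh predicate $P$ and applying compactness to $\eldiag{\mf{A}}\cup T^P\cup V^P$; your finite-satisfiability check (every $\forall^k\exists^*$ consequence of $\bigwedge T_0\wedge\bigwedge V_0$ is a consequence of $V\cup T$, hence lies in $\Gamma$, hence holds in $\mf{A}$) is sound, so the lifting works. What the paper's route buys is economy and self-containment --- one compactness argument on the type, no relativization machinery, and the theory version proved outright; what your route buys is a clean modular reduction to the sentence case, at the cost of needing that case as an input.

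One repair is needed before the chain construction can be ``cited essentially verbatim'' as you intend: your pivotal step, as stated, only produces elementary extensions of the fixed base $\mf{A}$, whereas the transfinite iteration requires at each successor stage an elementary extension of the \emph{current} chain element (this is precisely why Lemma~\ref{lemma:auxiliary} is formulated relative to an arbitrary $\mf{A}'\succeq\mf{A}$). Without this you are left having to amalgamate the independently produced $\mf{A}_{\bar{b}}$'s over $\mf{A}$, a step you do not supply. The fix is immediate: run your compactness argument with $\eldiag{\mf{A}'}$ in place of $\eldiag{\mf{A}}$; the finite-satisfiability argument is unaffected since $\mf{A}'$ also models $V\cup\Gamma$.
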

Using the above result, Theorem~\ref{theorem:glt-k} can be proved as
below.

\begin{proof}[Proof of Theorem~\ref{theorem:glt-k}]
We prove part (\ref{theorem:glt-k-ext}) of
Theorem~\ref{theorem:glt-k}. Part (\ref{theorem:glt-k-subs}) of
Theorem~\ref{theorem:glt-k} easily follows from part
(\ref{theorem:glt-k-ext}) and Lemma~\ref{lemma:duality}.

The `If' direction is straightforward. Let $\varphi(\bar{x})$ be
equivalent modulo $V$ to the $\forall^k \exists^*$ formula
$\psi(\bar{x})$. Then the theory $\{\varphi(\bar{x})\}$ is equivalent
modulo $V$ to the theory $\{\psi(\bar{x})\}$. Then
$\{\varphi(\bar{x})\}$, and hence $\varphi(\bar{x})$, is $k$-extension
closed modulo $V$ by
Theorem~\ref{theorem:glt-k-extensional-version-for-theories}.  For the
`Only if' direction, let $\varphi(\bar{x})$ be $k$-extension closed
modulo $V$; then so is the theory $\{\varphi(\bar{x})\}$. By
Theorem~\ref{theorem:glt-k-extensional-version-for-theories},
$\{\varphi(\bar{x})\}$ is equivalent to a theory $Z(\bar{x})$ of
$\forall^k \exists^*$ formulae whose free variables are among
$\bar{x}$. By Compactness theorem, $\{\varphi(\bar{x})\}$ is
equivalent modulo $V$ to a finite subset $Y(\bar{x})$ of
$Z(\bar{x})$. Then $\varphi(\bar{x})$ is equivalent modulo $V$ to the
conjunction of the formulae of $Y(\bar{x})$. Since any conjunction of
$\forall^k \exists^*$ formulae is equivalent (modulo any theory) to a
single $\forall^k \exists^*$ formula, the result follows.
\end{proof}

Towards Theorem~\ref{theorem:glt-k-extensional-version-for-theories},
we first recall some important notions and results from the classical
model theory literature~\cite{chang-keisler-short} that are needed for our
proof.

\begin{lemma}[Corollary 5.4.2, Chapter 5~\cite{chang-keisler-short}]\label{lemma:exis-amalgam}
  Let $\mf{A}$ and $\mf{B}$ be structures such that every existential
  sentence that is true in $\mf{B}$ is true in $\mf{A}$. Then $\mf{B}$
  is isomorphically embeddable in an elementary extension of $\mf{A}$.
\end{lemma}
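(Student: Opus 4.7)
The plan is to prove Lemma~\ref{lemma:exis-amalgam} by the classical \textbf{method of diagrams}, combined with the $\fo$ Compactness theorem. First I would introduce two expansions of the vocabulary: enrich $\tau$ with a fresh constant $c_a$ for every $a \in \univ{\mf{A}}$ and a fresh constant $d_b$ for every $b \in \univ{\mf{B}}$, producing an expanded vocabulary $\tau^{*}$. Let $\eldiag{\mf{A}}$ denote the \emph{elementary diagram} of $\mf{A}$, i.e.\ the set of all sentences over $\tau \cup \{c_a : a \in \univ{\mf{A}}\}$ that hold in the canonical expansion $(\mf{A}, a)_{a \in \univ{\mf{A}}}$; and let $\diag{\mf{B}}$ denote the \emph{atomic diagram} of $\mf{B}$, i.e.\ the set of all atomic and negated atomic sentences over $\tau \cup \{d_b : b \in \univ{\mf{B}}\}$ that hold in $(\mf{B}, b)_{b \in \univ{\mf{B}}}$. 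The goal becomes exhibiting a $\tau^{*}$-model of the combined theory $\Gamma := \eldiag{\mf{A}} \cup \diag{\mf{B}}$.

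Next I would argue, using Compactness, that $\Gamma$ is consistent. Take any finite subset $\Gamma_0 \subseteq \Gamma$. Only finitely many of the new constants $d_{b_1}, \ldots, d_{b_n}$ occur in $\Gamma_0$, and the sentences of $\Gamma_0$ drawn from $\diag{\mf{B}}$ amount to a single conjunction $\delta(d_{b_1}, \ldots, d_{b_n})$ of literals. The formula $\delta(\bar{y})$ is quantifier-free, so $\exists \bar{y}\, \delta(\bar{y})$ is an existential sentence; it is true in $\mf{B}$ (witnessed by $(b_1, \ldots, b_n)$) and therefore, by hypothesis, true in $\mf{A}$. Pick witnesses $\bar{a} = (a_1, \ldots, a_n)$ in $\mf{A}$, and expand $\mf{A}$ to a $\tau^{*}$-structure by interpreting each $c_a$ as $a$, each $d_{b_i}$ as $a_i$, and the remaining $d_b$'s arbitrarily. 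This expansion trivially satisfies every sentence of $\eldiag{\mf{A}}$ that is in $\Gamma_0$, and by construction it satisfies $\delta(d_{b_1}, \ldots, d_{b_n})$. Hence $\Gamma_0$ has a model, so by Compactness $\Gamma$ has a model $\mf{C}^{*}$.

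Finally, let $\mf{C}$ be the $\tau$-reduct of $\mf{C}^{*}$. Because $\mf{C}^{*} \models \eldiag{\mf{A}}$, the map $a \mapsto c_a^{\mf{C}^{*}}$ is an elementary embedding of $\mf{A}$ into $\mf{C}$; because $\mf{C}^{*} \models \diag{\mf{B}}$, the map $b \mapsto d_b^{\mf{C}^{*}}$ is an isomorphic embedding of $\mf{B}$ into $\mf{C}$. After renaming the image of $\mf{A}$ so as to identify $\mf{A}$ with an elementary substructure of $\mf{C}$, we obtain exactly the desired elementary extension of $\mf{A}$ into which $\mf{B}$ embeds.

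The main obstacle is the consistency step: it is the only place where the hypothesis on existential sentences enters, and the reason it works is that any finite fragment of $\diag{\mf{B}}$ can be packaged as a single existential assertion by existentially quantifying over the new constants, which then transfers from $\mf{B}$ to $\mf{A}$. The remainder is a routine application of Compactness together with standard facts about elementary and atomic diagrams.
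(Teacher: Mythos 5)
Your argument is correct: the consistency of $\eldiag{\mf{A}} \cup \diag{\mf{B}}$ via packaging a finite piece of the atomic diagram of $\mf{B}$ as an existential sentence and transferring it to $\mf{A}$ is exactly where the hypothesis is needed, and the rest follows from standard properties of diagrams. The paper gives no proof of this lemma --- it is quoted as Corollary 5.4.2 of Chang--Keisler --- and your diagram-plus-compactness argument is essentially the textbook proof of that result, so there is nothing to add.
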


Given a cardinal $\lambda$, an \emph{ascending chain}, or simply a
chain, $(\mf{A}_\eta)_{\eta < \lambda}$ of structures is a sequence
$\mf{A}_0, \mf{A}_1, \ldots$ of structures such that $\mf{A}_{0}
\subseteq \mf{A}_{1} \subseteq \ldots$. The \emph{union} of this chain
is a structure $\mf{A}$ defined as follows: (i) $\univ{\mf{A}} =
\bigcup_{\eta < \lambda} \univ{\mf{A}_\eta}$, (ii) $c^{\mf{A}} =
c^{\mf{A}_\eta}$ for every constant symbol $c \in \tau$ and every
$\eta < \lambda$, (iii) $R^{\mf{A}} = \bigcup_{\eta < \lambda}
R^{\mf{A}_\eta}$ for every relation symbol $R \in \tau$, and (iv)
$f^{\mf{A}} = \bigcup_{\eta < \lambda} f^{\mf{A}_\eta}$ for every
function symbol $f \in \tau$ (here, in taking the union of functions,
we view an $n$-ary function as its corresponding $(n+1)$-ary
relation). Observe that $\mf{A}$ is well-defined. We denote $\mf{A}$
as $\bigcup_{\eta < \lambda} \mf{A}_\eta$. If it is additionally the
case that $\mf{A}_{0} \preceq \mf{A}_{1} \preceq \ldots$ above, then
we say $(\mf{A}_\eta)_{\eta < \lambda}$ is an \emph{elementary
  chain}. We now have the following result.

\begin{theorem}[Tarski-Vaught elementary chain theorem, Theorem 3.1.9, Chapter 3~\cite{chang-keisler-short}]\label{theorem:elem-chain-theorem} Let $(\mf{A}_\eta)_{\eta < \lambda}$ be an
  elementary chain of structures. Then $\bigcup_{\eta < \lambda}
  \mf{A}_\eta$ is an elementary extension of $\mf{A}_\eta$ for each
  $\eta < \lambda$.
\end{theorem}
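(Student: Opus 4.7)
The plan is to prove, by induction on the complexity of an $\fo(\tau)$ formula $\varphi(\bar{x})$, the following statement: setting $\mf{A} := \bigcup_{\eta < \lambda} \mf{A}_\eta$, for every $\eta < \lambda$ and every tuple $\bar{a}$ from $\univ{\mf{A}_\eta}$, one has $\mf{A}_\eta \models \varphi(\bar{a})$ iff $\mf{A} \models \varphi(\bar{a})$. Applied to all formulae at once, this is exactly the definition of $\mf{A}_\eta \preceq \mf{A}$, so the theorem follows.

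For the base case, $\varphi$ is atomic. The construction of $\mf{A}$ as a union of the chain guarantees $\mf{A}_\eta \subseteq \mf{A}$ (interpretations of constant and function symbols agree by clauses (ii) and (iv), and relation-symbol interpretations on tuples from $\univ{\mf{A}_\eta}$ agree by clause (iii) together with the fact that the chain is ascending). Since atomic formulae are preserved and reflected between a structure and any extension, the base case holds. The propositional connectives are routine from the induction hypothesis.

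The crucial case is the existential quantifier: $\varphi(\bar{x}) = \exists y\, \psi(\bar{x}, y)$. The forward implication is immediate by applying the induction hypothesis to $\psi$ and a witness in $\univ{\mf{A}_\eta}$. For the reverse, suppose $\mf{A} \models \varphi(\bar{a})$ with $\bar{a}$ from $\univ{\mf{A}_\eta}$, and pick a witness $b \in \univ{\mf{A}}$ with $\mf{A} \models \psi(\bar{a}, b)$. Since $\univ{\mf{A}} = \bigcup_{\zeta < \lambda} \univ{\mf{A}_\zeta}$, some $\mu < \lambda$ satisfies $b \in \univ{\mf{A}_\mu}$; enlarging $\mu$ if necessary, we may assume $\mu \geq \eta$, so both $\bar{a}$ and $b$ lie in $\univ{\mf{A}_\mu}$. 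Applying the induction hypothesis at index $\mu$ gives $\mf{A}_\mu \models \psi(\bar{a}, b)$, whence $\mf{A}_\mu \models \varphi(\bar{a})$. Here the hypothesis that the chain is \emph{elementary} enters: $\mf{A}_\eta \preceq \mf{A}_\mu$ yields $\mf{A}_\eta \models \varphi(\bar{a})$, as required.

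The hard part is precisely this existential case, because a witness for a formula evaluated at parameters from $\mf{A}_\eta$ may only appear at a strictly later stage $\mf{A}_\mu$, and then one must transport satisfaction of the existential statement from $\mf{A}_\mu$ \emph{back down} to $\mf{A}_\eta$; mere inclusion $\mf{A}_\eta \subseteq \mf{A}_\mu$ is insufficient, which is why the elementary-chain hypothesis is indispensable (and why the theorem fails for arbitrary ascending chains). Universal quantifiers need no separate argument: either rewrite $\forall y$ as $\neg \exists y\, \neg$ and invoke the existential case together with the negation step, or repeat the same argument dually, using $\mf{A}_\eta \preceq \mf{A}_\mu$ to push a universal counterexample from $\mf{A}_\eta$ up to $\mf{A}_\mu$.
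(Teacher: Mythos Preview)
Your proof is correct and is essentially the standard textbook argument (by induction on formula complexity, with the existential step being the one place where elementarity of the chain is used). Note, however, that the paper does not actually prove this theorem: it is quoted as a known result from Chang--Keisler (Theorem~3.1.9) and used as a black box in the proof of Lemma~\ref{lemma:elem-ext-containing-k-ary-cover}. So there is no ``paper's own proof'' to compare against; your write-up simply supplies the classical proof that the paper cites.
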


The key element of our proof of
Theorem~\ref{theorem:glt-k-extensional-version-for-theories} is the
notion of a $k$-ary cover of a structure $\mf{A}$ \emph{in an
  extension of $\mf{A}$}. Below is the definition. Observe that this
notion generalizes the notion of $k$-ary cover seen in
Definition~\ref{defn:PCE(k)} -- the latter corresponds to the notion
in Definition~\ref{defn:k-ary-cover-inside-an-elem-ext}, with
$\mf{A}^+$ being the same as $\mf{A}$.

\begin{definition}\label{defn:k-ary-cover-inside-an-elem-ext}
Let $\mf{A}$ be a structure and $\mf{A}^+$ be an extension of
$\mf{A}$. A non-empty collection $\mc{R}$ of substructures of
$\mf{A}^+$ is said to be a \emph{$k$-ary cover of $\mf{A}$ in
  $\mf{A}^+$} if for every $k$-tuple $\bar{a}$ of elements of
$\mf{A}$, there exists a structure in $\mc{R}$ that contains (the
elements of) $\bar{a}$.
\end{definition}

The following lemma is at the heart of our proof. It (along with its
application in proving
Theorem~\ref{theorem:glt-k-extensional-version-for-theories}) shows
why ``full'' $\lambda$-saturation as is used in a similar result
(Lemma 4.2) in~\cite{abhisekh-apal}, is not needed for
Theorem~\ref{theorem:glt-k-extensional-version-for-theories}.  Below,
$\Gamma = \{ \varphi \mid (V \cup T) \rightarrow
\varphi~\text{where}~\varphi~\text{is a}~\forall^k
\exists^*~\text{sentence}\}$. Also a \emph{consistent} theory is one
that has a model.


\begin{lemma}\label{lemma:elem-ext-containing-k-ary-cover}
Let $V$ and $T$ be consistent theories and $k \in \mathbb{N}$. Let
$\Gamma$ be the set of $\forall^k \exists^*$ consequences of $T$
modulo $V$. Then for every model $\mf{A}$ of $V$, the following are
equivalent:
\begin{enumerate}[nosep]
\item \vspace{2pt} $\mf{A}$ is a model of $V \cup
  \Gamma$.\label{lemma:key-lemma-cond-1}
\item \vspace{2pt} There exists an elementary extension $\mf{A}^+$ of
  $\mf{A}$ and a $k$-ary cover $\mc{R}$ of $\mf{A}$ in $\mf{A}^+$ such
  that $\mf{B} \models (V \cup T)$ for every $\mf{B} \in
  \mc{R}$.\label{lemma:key-lemma-cond-2}
\item \vspace{2pt} There exists an elementary extension $\mf{A}^+$ of
  $\mf{A}$ and a $k$-ary cover $\mc{R}$ of $\mf{A}^+$ (in $\mf{A}^+$)
  such that $\mf{B} \models (V \cup T)$ for every $\mf{B} \in
  \mc{R}$.\label{lemma:key-lemma-cond-3}
\end{enumerate}
\end{lemma}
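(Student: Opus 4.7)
The plan is to prove the three conditions equivalent via the cycle $(3) \Rightarrow (1) \Rightarrow (2) \Rightarrow (3)$. The direction $(3) \Rightarrow (1)$ is the quickest: given a self-contained $k$-ary cover $\mc{R}$ of $\mf{A}^+$ by models of $V \cup T$, any $\forall^k \exists^*$ sentence $\forall \bar{x}\, \exists \bar{y}\, \chi(\bar{x},\bar{y})$ of $\Gamma$ holds in $\mf{A}^+$ because each $k$-tuple $\bar{a}$ in $\mf{A}^+$ sits inside some $\mf{B} \in \mc{R}$ where witnesses $\bar{b}$ exist, and the quantifier-free kernel $\chi(\bar{a},\bar{b})$ transfers up to $\mf{A}^+ \supseteq \mf{B}$. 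Since $\mf{A} \preceq \mf{A}^+$, we obtain $\mf{A} \models V \cup \Gamma$.

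The main obstacle is $(1) \Rightarrow (2)$, which I plan to establish by transfinite induction on the $k$-tuples of $\mf{A}$. Enumerate these as $(\bar{a}_\eta)_{\eta < \lambda}$ and build an elementary chain $\mf{A} = \mf{A}_0 \preceq \mf{A}_1 \preceq \cdots$ so that $\mf{A}_{\eta+1}$ contains a substructure $\mf{B}_\eta \models V \cup T$ covering $\bar{a}_\eta$, with limit stages handled by Theorem~\ref{theorem:elem-chain-theorem}. For the successor step, introduce $k$ fresh constants $\bar{c}$, let $\Delta$ be the set of all universal $\tau(\bar{c})$-sentences true in $(\mf{A}_\eta, \bar{a}_\eta)$, and consider the $\tau(\bar{c})$-theory $\Theta = V \cup T \cup \Delta$. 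A model of $\Theta$ supplies $(\mf{B}_\eta, \bar{c}^{\mf{B}_\eta})$ in which every existential $\tau(\bar{c})$-sentence is true also in $(\mf{A}_\eta, \bar{a}_\eta)$; Lemma~\ref{lemma:exis-amalgam} then embeds $(\mf{B}_\eta, \bar{c}^{\mf{B}_\eta})$ into a $\tau(\bar{c})$-elementary extension of $(\mf{A}_\eta, \bar{a}_\eta)$, yielding $\mf{A}_{\eta+1}$ in which the image of $\mf{B}_\eta$ contains $\bar{a}_\eta$ and still models $V \cup T$. The consistency of $\Theta$ is precisely where the hypothesis $\mf{A} \models \Gamma$ is used: an inconsistency would yield, via Compactness, a single $\delta = \forall \bar{y}\, \psi(\bar{c}, \bar{y}) \in \Delta$ with $V \cup T \models \neg \delta$, so the $\forall^k \exists^*$ sentence $\forall \bar{x}\, \exists \bar{y}\, \neg \psi(\bar{x}, \bar{y})$ lies in $\Gamma$ and must hold in $\mf{A}_\eta \succeq \mf{A}$, contradicting $(\mf{A}_\eta, \bar{a}_\eta) \models \delta$. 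Taking $\mf{A}^+ = \bigcup_{\eta < \lambda} \mf{A}_\eta$ and $\mc{R} = \{\mf{B}_\eta \mid \eta < \lambda\}$ then delivers condition (2).

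Finally, for $(2) \Rightarrow (3)$ I plan to iterate $(1) \Rightarrow (2)$ along $\omega$. From (2) one immediately gets $\mf{A} \models \Gamma$ (by the same argument as $(3) \Rightarrow (1)$, but restricted to $k$-tuples of $\mf{A}$), so (1) holds. Set $\mf{C}_0 = \mf{A}$; given $\mf{C}_n$, which is elementary over $\mf{A}$ and hence a model of $V \cup \Gamma$, apply $(1) \Rightarrow (2)$ to obtain an elementary extension $\mf{C}_{n+1}$ together with a $k$-ary cover $\mc{R}_n$ of $\mf{C}_n$ in $\mf{C}_{n+1}$ of models of $V \cup T$. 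Letting $\mf{A}^+ = \bigcup_{n \in \omega} \mf{C}_n$ (elementary over each $\mf{C}_n$ by Theorem~\ref{theorem:elem-chain-theorem}) and $\mc{R} = \bigcup_{n \in \omega} \mc{R}_n$, every $k$-tuple of $\mf{A}^+$ lies inside some $\mf{C}_n$ and is covered by an element of $\mc{R}_n \subseteq \mc{R}$, giving the self-contained cover required by (3). This closes the cycle.
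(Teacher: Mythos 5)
Your proposal is correct and follows essentially the same route as the paper's proof: the same cycle $(3)\Rightarrow(1)\Rightarrow(2)\Rightarrow(3)$, with your successor step (consistency of $V\cup T\cup\Delta$ via $\Gamma$ and Compactness, followed by the embedding from Lemma~\ref{lemma:exis-amalgam}) being exactly the paper's Lemma~\ref{lemma:auxiliary} phrased with fresh constants in place of the $\Pi^0_1$-type $\mathsf{tp}_{\Pi,\mf{A},\bar{a}}(\bar{x})$, and the same $\omega$-iteration argument for $(2)\Rightarrow(3)$. The only cosmetic divergence is that the paper treats finite $\mf{A}$ as a separate case in $(1)\rightarrow(2)$ because it indexes the $k$-tuples by the successor cardinal of $|\mf{A}|$, whereas your uniform transfinite enumeration renders that case split unnecessary.
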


Using the above lemma,
Theorem~\ref{theorem:glt-k-extensional-version-for-theories} can be
proved as follows.

\begin{proof}[Proof of Theorem~\ref{theorem:glt-k-extensional-version-for-theories}]
  We prove the theorem for theories without free variables; the proof
  for theories with free variables follows from definitions.
  
  \und{If:} Suppose $T$ is equivalent modulo $V$ to a theory $Z$ of
  $\forall^k \exists^*$ sentences. Let $\mf{A} \models V$ and let
  $\mc{R}$ be a $k$-ary cover of $\mf{A}$ consisting of models of $V
  \cup T$. We show that $\mf{A} \models T$. Consider a sentence
  $\varphi := \forall^k \bar{x} \psi(\bar{x}) \in Z$ where
  $\psi(\bar{x})$ is an existential formula. Let $\bar{a}$ be a
  $k$-tuple from $\mf{A}$. Since $\mc{R}$ is a $k$-ary cover of
  $\mf{A}$, there exists $\mf{B} \in \mc{R}$ such that $\mf{B}
  \subseteq \mf{A}$ and $\mf{B}$ contains $\bar{a}$. Since $\mf{B}
  \models V \cup T$, we have $\mf{B} \models Z$ (since $Z$ and $T$ are
  equivalent modulo $V$); then $\mf{B} \models \varphi$ and hence
  $(\mf{B}, \bar{a}) \models \psi(\bar{x})$. Since existential
  formulae are preserved under extensions by {\lt} theorem, we have
  $(\mf{A}, \bar{a}) \models \psi(\bar{x})$. Since $\bar{a}$ is an
  arbitrary $k$-tuple of $\mf{A}$, we have $\mf{A} \models
  \varphi$. Finally, since $\varphi$ is an arbitrary sentence of $Z$,
  we have $\mf{A} \models Z$, and hence $\mf{A} \models T$.

  \und{Only if:} Conversely, suppose $T$ is $k$-extension closed
  modulo $V$. If $V \cup T$ is unsatisfiable, we are trivially
  done. Else, let $\Gamma$ be the set of $\forall^k \exists^*$
  consequences of $T$ modulo $V$.  Then $(V \cup T) \rightarrow (V
  \cup \Gamma)$. Conversely, suppose $\mf{A} \models (V \cup
  \Gamma)$. By Lemma~\ref{lemma:elem-ext-containing-k-ary-cover},
  there exists an elementary extension $\mf{A}^+$ of $\mf{A}$ (hence
  $\mf{A}^+ \models V$) for which there is a $k$-ary cover consisting
  of models of $V \cup T$. Then $\mf{A}^+ \models T$ since $T$ is
  $k$-extension closed modulo $V$, whereby $\mf{A} \models T$.  In
  other words, $(V \cup \Gamma) \rightarrow (V \cup T)$, so that $T$
  is equivalent to $\Gamma$ modulo V. Then $\Gamma$ is the desired
  $\forall^k \exists^*$ theory.
\end{proof}
  
Towards the proof of
Lemma~\ref{lemma:elem-ext-containing-k-ary-cover}, we would require
the following auxiliary lemma.

\begin{lemma}\label{lemma:auxiliary}
  Let $V, T$ and $\Gamma$ be as in the statement of
  Lemma~\ref{lemma:elem-ext-containing-k-ary-cover}, and suppose
  $\mf{A} \models (V \cup \Gamma)$. Given an elementary extension
  $\mf{A}'$ of $\mf{A}$ and a $k$-tuple $\bar{a}$ of $\mf{A}$, there
  exist an elementary extension $\mf{A}''$ of $\mf{A}'$ and a
  substructure $\mf{B}$ of $\mf{A}''$ such that (i) $\mf{B}$ contains
  $\bar{a}$ and (ii) $\mf{B} \models (V \cup T)$.
\end{lemma}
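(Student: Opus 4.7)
The plan is to produce the desired substructure of an elementary extension of $\mf{A}'$ by applying Lemma~\ref{lemma:exis-amalgam} after naming $\bar{a}$. Fix $k$ fresh constant symbols $\bar{c} = (c_1, \ldots, c_k)$ and set $\tau^+ = \tau \cup \{\bar{c}\}$. View $(\mf{A}', \bar{a})$ and $(\mf{A}, \bar{a})$ as $\tau^+$-structures by interpreting each $c_i$ as $a_i$, so that $(\mf{A}, \bar{a}) \preceq (\mf{A}', \bar{a})$. Let $E$ be the set of existential $\tau^+$-sentences that are \emph{not} satisfied by $(\mf{A}', \bar{a})$, and consider the $\tau^+$-theory
\[
\Theta \;=\; V \cup T \cup \{\neg \sigma \mid \sigma \in E\}.
\]
If $\Theta$ has a model $(\mf{B}, \bar{b})$, then every existential $\tau^+$-sentence true in $(\mf{B}, \bar{b})$ is also true in $(\mf{A}', \bar{a})$. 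Lemma~\ref{lemma:exis-amalgam}, applied in vocabulary $\tau^+$, then embeds $(\mf{B}, \bar{b})$ isomorphically into an elementary extension $(\mf{A}'', \bar{a})$ of $(\mf{A}', \bar{a})$. Forgetting $\bar{c}$, $\mf{A}''$ is an elementary extension of $\mf{A}'$, and since the embedding sends $\bar{b}$ to $\bar{a}$, the image of $\mf{B}$ is a substructure of $\mf{A}''$ that contains $\bar{a}$ and is isomorphic to $\mf{B}$, hence a model of $V \cup T$. This is exactly the conclusion of the lemma.

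The main step is therefore to verify that $\Theta$ is consistent, and the natural tool is the Compactness theorem. Suppose for contradiction $\Theta$ were inconsistent; then there would be finitely many existential $\tau^+$-sentences $\sigma_i = \exists \bar{y}_i\, \psi_i(\bar{c}, \bar{y}_i)$, each false in $(\mf{A}', \bar{a})$, such that $V \cup T \models \sigma_1 \vee \cdots \vee \sigma_m$. Replacing $\bar{c}$ by free variables $\bar{x}$ and prenexing the disjunction of existentials, this is equivalent to saying $V \cup T \models \forall \bar{x}\, \bigvee_{i=1}^{m} \exists \bar{y}_i\, \psi_i(\bar{x}, \bar{y}_i)$, which is (up to a routine rewriting) a single $\forall^k \exists^*$ sentence, and therefore belongs to $\Gamma$. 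Since $\mf{A} \models V \cup \Gamma$, we would have $(\mf{A}, \bar{a}) \models \sigma_i$ for some $i$, and since $\mf{A} \preceq \mf{A}'$ this would lift to $(\mf{A}', \bar{a}) \models \sigma_i$, contradicting $\sigma_i \in E$.

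The only delicate bookkeeping is the prenex rewriting that packages a finite disjunction of existential sentences into a single $\forall^k \exists^*$ sentence (namely $\forall \bar{x}\, \exists \bar{y}_1 \cdots \exists \bar{y}_m \bigvee_i \psi_i$), so that it can be recognized as a member of $\Gamma$. I do not expect a genuine obstacle beyond this: the whole content of the argument is the exchange between ``$\forall^k \exists^*$ consequences of $T$ modulo $V$'' and ``denials of existential $\tau^+$-sentences that fail in $(\mf{A}', \bar{a})$'', and once $\Theta$ is set up correctly, compactness together with $\mf{A} \models \Gamma$ and Lemma~\ref{lemma:exis-amalgam} finish the proof.
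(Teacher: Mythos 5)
Your proposal is correct and is essentially the paper's own argument: your theory $\Theta = V \cup T \cup \{\neg\sigma \mid \sigma \in E\}$ coincides (up to logical equivalence, using $\mf{A} \preceq \mf{A}'$) with the paper's $Z(\bar{x}) = V \cup T \cup \mathsf{tp}_{\Pi, \mf{A}, \bar{a}}(\bar{x})$, the $\Pi^0_1$-type of $\bar{a}$; and your compactness step, which packages a finite disjunction of existential sentences into a single $\forall^k\exists^*$ member of $\Gamma$, is the dual phrasing of the paper's step that negates a finite conjunction of universal formulae from the type. The concluding application of Lemma~\ref{lemma:exis-amalgam} in the expanded vocabulary is identical in both.
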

\begin{proof}
  Let $\mathsf{tp}_{\Pi, \mf{A}, \bar{a}}(\bar{x})$ denote the
  $\Pi^0_1$-type of $\bar{a}$ in $\mf{A}$, that is, the set of all
  $\Pi^0_1$ formulae that are true of $\bar{a}$ in $\mf{A}$ (so
  $|\bar{x}| = |\bar{a}|$).  Let $Z(\bar{x})$ be the theory given by
  $Z(\bar{x}) := V \cup T \cup \mathsf{tp}_{\Pi, \mf{A},
    \bar{a}}(\bar{x})$. We show below that $Z(\bar{x})$ is
  satisfiable. Assuming this, it follows that if $(\mf{D}, \bar{d})
  \models Z(\bar{x})$, then every existential sentence that is true in
  $(\mf{D}, \bar{d})$ is also true in $(\mf{A}, \bar{a})$, and hence
  in $(\mf{A}', \bar{a})$. Then by Lemma~\ref{lemma:exis-amalgam},
  there is an isomorphic embedding $f$ of $(\mf{D}, \bar{d})$ in an
  elementary extension $(\mf{A}'', \bar{a})$ of $(\mf{A}',
  \bar{a})$. If the vocabulary of $\mf{A}$ is $\tau$, then taking
  $\mf{B}$ to be the $\tau$-reduct of the image of $(\mf{D}, \bar{d})$
  under $f$, we see that $\mf{B}$ and $\mf{A}''$ are as desired.

  We show $Z(\bar{x})$ is satisfiable by contradiction. Suppose
  $Z(\bar{x})$ is inconsistent; then by Compactness theorem, there is
  a finite subset of $Z(\bar{x})$ that is inconsistent. Since
  $\mathsf{tp}_{\Pi, \mf{A}, \bar{a}}(\bar{x})$ is closed under finite
  conjunctions and since each of $\mathsf{tp}_{\Pi, \mf{A},
    \bar{a}}(\bar{x})$, $V$ and $T$ is consistent, there exists
  $\psi(\bar{x}) \in \mathsf{tp}_{\Pi, \mf{A}, \bar{a}}(\bar{x})$ such
  that $V \cup T \cup \{\psi(\bar{x})\}$ is inconsistent. In other
  words, $(V \cup T) \rightarrow \neg \psi(\bar{x})$.  Since $V \cup
  T$ has no free variables, we have $(V \cup T) \rightarrow \varphi$,
  where $\varphi := \forall^k \bar{x}$ $\neg \psi(\bar{x})$.  Observe
  that $\neg \psi(\bar{x})$ is equivalent to an existential formula;
  then $\varphi$ is equivalent to a sentence in $\Gamma$, and hence
  $\mf{A} \models \varphi$.  Then $(\mf{A}, \bar{a}) \models \neg
  \psi(\bar{x})$, contradicting our inference that $\psi(\bar{x}) \in
  \mathsf{tp}_{\Pi, \mf{A}, \bar{a}}(\bar{x})$.
\end{proof}

\begin{proof}[Proof of Lemma~\ref{lemma:elem-ext-containing-k-ary-cover}]
  \und{$(\ref{lemma:key-lemma-cond-3}) \rightarrow
    (\ref{lemma:key-lemma-cond-1})$:} This implication is established
  along similar lines as the `If' direction of
  Theorem~\ref{theorem:glt-k-extensional-version-for-theories}. Let
  $\varphi := \forall^k \bar{x} \psi(\bar{x})$ for a $\Sigma^0_1$
  formula $\psi(\bar{x})$, be a sentence of $\Gamma$, and let
  $\bar{a}$ be a $k$-tuple of $\mf{A}$. Since $\mc{R}$ is a $k$-ary
  cover of $\mf{A}^+$, there exists $\mf{B}_{\bar{a}} \in \mc{R}$ such
  that $\mf{B}_{\bar{a}}$ contains $\bar{a}$ and $\mf{B}_{\bar{a}}
  \models (V \cup T)$. Then $\mf{B}_{\bar{a}} \models \Gamma$ so that
  $\mf{B}_{\bar{a}} \models \varphi$, and hence $(\mf{B}_{\bar{a}},
  \bar{a}) \models \psi(\bar{x})$. Since $\psi(\bar{x})$ is a
  $\Sigma^0_1$ formula and $\mf{B}_{\bar{a}} \subseteq \mf{A}^+$, we
  have $(\mf{A}^+, \bar{a}) \models \psi(\bar{x})$ since existential
  formulae are preserved under extensions by the {\lt} theorem. Then
  $(\mf{A}, \bar{a}) \models \psi(\bar{x})$ since $\mf{A} \preceq
  \mf{A}^+$. Since $\bar{a}$ is an arbitrary $k$-tuple of $\mf{A}$, we
  have $\mf{A} \models \varphi$.

  \vspace{2pt} \noindent \und{$(\ref{lemma:key-lemma-cond-1})
    \rightarrow (\ref{lemma:key-lemma-cond-2})$:} We have two cases
  here depending on whether $\mf{A}$ is finite or infinite.

  (1) $\mf{A}$ is finite: Given a $k$-tuple $\bar{a}$ of $\mf{A}$, by
  Lemma~\ref{lemma:auxiliary} there exists an elementary extension
  $\mf{A}''$ of $\mf{A}$ and a substructure $\mf{B}_{\bar{a}}$ of
  $\mf{A}''$ such that (i) $\mf{B}_{\bar{a}}$ contains $\bar{a}$ and
  (ii) $\mf{B}_{\bar{a}} \models (V \cup T)$. Since $\mf{A}$ is
  finite, and since elementary equivalence is the same as isomorphism
  over finite structures~\cite{chang-keisler-short}, we have $\mf{A}''
  = \mf{A}$. Then taking $\mf{A}^+ = \mf{A}$ and $\mc{R} =
  \{\mf{B}_{\bar{a}} \mid \bar{a} \in (\mathsf{U}_{\mf{A}})^k\}$, we
  see that $\mf{A}^+$ and $\mc{R}$ are respectively indeed the desired
  elementary extension of $\mf{A}$ and $k$-ary cover of $\mf{A}$ in
  $\mf{A}^+$.

  (2) $\mf{A}$ is infinite: The proof for this case is along the lines
  of the proof of the characterization of $\Pi^0_2$ sentences in terms
  of the property of preservation under unions of chains (see proof of
  Theorem 3.2.3 in Chapter 3 of~\cite{chang-keisler-short}). Let
  $\lambda$ be the successor cardinal of $|\mf{A}|$ and
  $(\bar{a}_\kappa)_{\kappa < \lambda}$ be an enumeration of the
  $k$-tuples of $\mf{A}$. For $\eta \leq \lambda$, given sequences
  $(\mf{E}_\kappa)_{\kappa < \eta}$ and $(\mf{F}_\kappa)_{\kappa <
    \eta}$ of structures, we say that $\mc{P}((\mf{E}_\kappa)_{\kappa
    < \eta}, (\mf{F}_\kappa)_{\kappa < \eta})$ is true iff
  $(\mf{E}_\kappa)_{\kappa < \eta}$ is an elementary chain and $\mf{A}
  \preceq \mf{E}_0$, and for each $\kappa < \eta$, we have (i)
  $\mf{F}_\kappa \subseteq \mf{E}_\kappa$, (ii) $\mf{F}_{\kappa}$
  contains $\bar{a}_{\kappa}$, and (iii) $\mf{F}_\kappa \models (V
  \cup T)$. We show below the existence of sequences
  $(\mf{A}_\kappa)_{\kappa < \lambda}$ and $(\mf{B}_\kappa)_{\kappa <
    \lambda}$ of structures such that $\mc{P}((\mf{A}_\kappa)_{\kappa
    < \lambda}, (\mf{B}_\kappa)_{\kappa < \lambda})$ is true. Then
  taking $\mf{A}^+ = \bigcup_{\kappa < \lambda} \mf{A}_\kappa$ and
  $\mc{R} = \{\mf{B}_\kappa \mid \kappa < \lambda\}$, we see by
  Theorem~\ref{theorem:elem-chain-theorem} that $\mf{A}^+$ and
  $\mc{R}$ are respectively indeed the elementary extension of
  $\mf{A}$ and $k$-ary cover of $\mf{A}$ in $\mf{A}^+$ as desired.

  We construct the sequences $(\mf{A}_\kappa)_{\kappa < \lambda}$ and
  $(\mf{B}_\kappa)_{\kappa < \lambda}$ by constructing for each
  positive ordinal $\eta < \lambda$, the partial (initial) sequences
  $(\mf{A}_\kappa)_{\kappa < \eta}$ and $(\mf{B}_\kappa)_{\kappa <
    \eta}$ and showing that $\mc{P}((\mf{A}_\kappa)_{\kappa < \eta},
  (\mf{B}_\kappa)_{\kappa < \eta})$ is true. We do this by
  (transfinite) induction on $\eta$.  For the base case of $\eta = 1$,
  we see by Lemma~\ref{lemma:auxiliary} that if $\mf{A}' = \mf{A}$,
  then there exists an elementary extension $\mf{A}''$ of $\mf{A}$ and
  a substructure $\mf{B}$ of $\mf{A}''$ such that (i) $\mf{B}$
  contains $\bar{a}_0$ and (ii) $\mf{B} \models (V \cup T)$. Then
  taking $\mf{A}_0 = \mf{A}''$ and $\mf{B}_0 = \mf{B}$, we see that
  $\mc{P}((\mf{A}_0), (\mf{B}_0))$ is true.  As the induction
  hypothesis, assume that we have constructed sequences
  $(\mf{A}_\kappa)_{\kappa < \eta}$ and $(\mf{B}_\kappa)_{\kappa <
    \eta}$ such that $\mc{P}((\mf{A}_\kappa)_{\kappa < \eta},
  (\mf{B}_\kappa)_{\kappa < \eta})$ is true.  Then by
  Theorem~\ref{theorem:elem-chain-theorem}, the structure $\mf{A}' =
  \bigcup_{\kappa < \eta} \mf{A}_\kappa$ is such that $\mf{A}_\kappa
  \preceq \mf{A}'$ for each $\kappa < \eta$. Then for the tuple
  $\bar{a}_\eta$ of $\mf{A}$, by Lemma~\ref{lemma:auxiliary}, there
  exists an elementary extension $\mf{C}$ of $\mf{A}'$ and a
  substructure $\mf{D}$ of $\mf{C}$ such that (i) $\mf{D}$ contains
  $\bar{a}_\eta$ and (ii) $\mf{D} \models (V \cup T)$.  Then taking
  $\mf{A}_\eta = \mf{C}$ and $\mf{B}_\eta = \mf{D}$, and letting $\mu$
  be the successor ordinal of $\eta$, we see that
  $\mc{P}((\mf{A}_\kappa)_{\kappa < \mu}, (\mf{B}_\kappa)_{\kappa <
    \mu})$ is true, completing the induction.

  \vspace{2pt} \noindent \und{$(\ref{lemma:key-lemma-cond-2})
    \rightarrow (\ref{lemma:key-lemma-cond-3})$:} Applying
  (\ref{lemma:key-lemma-cond-2}) to $\mf{A}$ and then iteratively to
  the elementary extensions that (\ref{lemma:key-lemma-cond-2})
  produces, we get a sequence $(\mf{A}_i)_{i \ge 0}$ of structures and
  a sequence $(\mc{R}_i)_{i \ge 0}$ of collections of structures with
  the following properties:
  \begin{enumerate}[nosep]
    \item \vspace{2pt} $(\mf{A}_i)_{i \ge 0}$ is an elementary chain
      such that $\mf{A}_0 = \mf{A}$ (whereby $\mf{A}_i \models V$ for
      $i \ge 0$).
    \item \vspace{2pt} For each $i \ge 0$, $\mc{R}_i$ is a $k$-ary
      cover of $\mf{A}_i$ in $\mf{A}_{i+1}$ such that $\mf{B} \models
      (V \cup T)$ for every $\mf{B} \in \mc{R}_i$.
  \end{enumerate}

  Consider the structure $\mf{A}^+ = \bigcup_{i \ge 0} \mf{A}_i$. By
  Theorem~\ref{theorem:elem-chain-theorem}, we have $\mf{A}_i \preceq
  \mf{A}^+$ for each $i \ge 0$, and (hence) that $\mf{A}^+ \models
  V$. Consider any $k$-tuple $\bar{a}$ of $\mf{A}^+$; there exists $j
  \ge 0$ such $\bar{a}$ is contained in $\mf{A}_j$. Then there exists
  a structure $\mf{B}_{\bar{a}} \in \mc{R}_j$ such that (i)
  $\mf{B}_{\bar{a}}$ contains $\bar{a}$ and (ii) $\mf{B}_{\bar{a}}
  \models (V \cup T)$. Since $\mf{B}_{\bar{a}} \in \mc{R}_j$, we have
  $\mf{B}_{\bar{a}} \subseteq \mf{A}_{j+1}$ and since $\mf{A}_{j+1}
  \preceq \mf{A}^+$, we have $\mf{B}_{\bar{a}} \subseteq
  \mf{A}^+$. Then $\mc{R} = \{\mf{B}_{\bar{a}} \mid \bar{a}~\text{is
    a}~k\text{-tuple from}~\mf{A}^+\}$ is the desired $k$-ary cover of
  $\mf{A}^+$ such that $\mf{B} \models (V \cup T)$ for each $\mf{B}
  \in \mc{R}$.
\end{proof}

\section{A stronger failure of {\lt} theorem in the finite}\label{section:los-tarski-failure}

In this section, we strengthen the known failure of the {\lt} theorem
in the finite~\cite{gurevich-alechina}. As a consequence, we get a
strengthening of the failure of $\glt{k}$ in the finite for each $k$,
over the one proved in~\cite{wollic12-paper}. Below, by $\varphi_k$ is
($k$-)hereditary over $\cl{S}$ we mean that the class of finite models
of $\varphi_k$ is ($k$-)hereditary over $\cl{S}$.

\begin{theorem}\label{prop:strong-failure-of-LT-in-the-finite}
There exists a vocabulary $\tau$ such that if $\cl{S}$ is the class of
all finite $\tau$-structures, then for each $k \ge 0$, there exists an
FO($\tau$) sentence $\varphi_k$ that is hereditary over $\cl{S}$, but
that is not equivalent over $\cl{S}$, to any $\exists^k \forall^*$
sentence. It follows that there is a sentence that is $k$-hereditary
over $\cl{S}$ ($\varphi_k$ being one such sentence) but that is not
equivalent over $\cl{S}$ to any $\exists^k \forall^*$ sentence.
\end{theorem}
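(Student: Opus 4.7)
The plan is, for each $k \ge 0$, to exhibit a vocabulary $\tau$, a sentence $\varphi_k \in \fo(\tau)$, and a family of distinguishing pairs $(\mf{A}_n, \mf{B}_n)_{n \ge 0}$ of finite $\tau$-structures such that (a) every substructure of a model of $\varphi_k$ is again a model of $\varphi_k$ (so $\varphi_k$ is hereditary over $\cl{S}$, and hence $k$-hereditary for all $k$); (b) $\mf{A}_n \models \varphi_k$ and $\mf{B}_n \not\models \varphi_k$ for every $n$; and (c) no $\exists^k \forall^n$ sentence separates $\mf{A}_n$ from $\mf{B}_n$. Since any putative $\exists^k \forall^*$ sentence equivalent to $\varphi_k$ over $\cl{S}$ has some fixed number $n$ of universal quantifiers, the pair $(\mf{A}_n, \mf{B}_n)$ would refute the equivalence, yielding the theorem. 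The shape of $\varphi_k$ can be taken in $\exists^{k+1} \forall^*$ form by design, so the result pinpoints the gap in the existential quantifier count at exactly one.

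For condition (c) I would use the EF game adapted to the $\exists^k \forall^n$ fragment: Spoiler first picks a $k$-tuple $\bar{a}$ from $\mf{A}_n$, Duplicator answers with a $k$-tuple $\bar{b}$ from $\mf{B}_n$; then Spoiler picks an $n$-tuple $\bar{d}$ from $\mf{B}_n$, and Duplicator answers with an $n$-tuple $\bar{c}$ from $\mf{A}_n$; Duplicator wins iff the induced map between $(\bar{a}, \bar{c})$ and $(\bar{b}, \bar{d})$ is a partial isomorphism of $\tau$-structures, equivalently the $(k+n)$-tuples have the same quantifier-free type. A routine unfolding of the semantics of $\exists^k \forall^n$ shows that a Duplicator winning strategy in this game implies every $\exists^k \forall^n$ sentence true in $\mf{A}_n$ is also true in $\mf{B}_n$. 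So the game-theoretic task reduces to producing $(\mf{A}_n, \mf{B}_n)$ on which Duplicator wins for every $n$.

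The intended construction of $(\mf{A}_n, \mf{B}_n)$ rests on a pigeonhole principle applied to the first round. The structure $\mf{A}_n$ is built from $k+1$ pairwise isomorphic ``gadgets'' of size growing with $n$, arranged so that $\varphi_k$ holds precisely in virtue of the $(k+1)$-st gadget, while $\mf{B}_n$ contains only $k$ such gadgets (so $\varphi_k$ fails in $\mf{B}_n$), together with enough ambient scaffolding that its first-order type is otherwise indistinguishable from that of $\mf{A}_n$ to a Duplicator. After Spoiler's $k$ first-round pebbles on $\mf{A}_n$, at least one gadget of $\mf{A}_n$ contains no pebble; Duplicator maps the touched gadgets of $\mf{A}_n$ to the matching gadgets of $\mf{B}_n$ so that the pebble placements correspond, and then, when Spoiler places $n$ pebbles in $\mf{B}_n$ in round two, Duplicator answers within the free gadget of $\mf{A}_n$, which is rich enough (thanks to the size chosen in terms of $n$) to realize any quantifier-free type on $n$ elements arising in $\mf{B}_n$. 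Hereditariness of $\varphi_k$ should be arranged at the syntactic level, for instance by taking $\varphi_k$ as a disjunction whose second disjunct is a universal ``small-substructure'' clause catching all proper substructures of models of the first disjunct.

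The main obstacle will be the concrete design of the gadgets and the verification of Duplicator's strategy: the gadgets must be simultaneously rich enough that, in round two, the free gadget can absorb any $n$-element configuration from $\mf{B}_n$ (forcing the gadget size to grow with $n$), and symmetric enough that the round-one response of Duplicator is forced and transports no extra information between the rounds. Once these combinatorial properties of the gadgets are established, the verification that the composed map is a partial isomorphism is a routine case analysis on the quantifier-free atomic formulas over the $(k+n)$-tuples, and the strengthening of the failure of $\glt{k}$ claimed in the second sentence of the theorem follows because $\varphi_k$, being hereditary, is in particular $l$-hereditary for every $l \ge 0$, while remaining not equivalent to any $\exists^k \forall^*$ sentence over $\cl{S}$.
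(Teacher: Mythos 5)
Your high-level plan coincides with the architecture of the paper's proof (and with the summary given in its introduction): an EF game adapted to $\exists^k \forall^n$ sentences, a pigeonhole argument over $k+1$ interchangeable ``gadgets'' so that Spoiler's $k$ first-round pebbles must miss one of them, a gadget size growing with $n$ so that the free gadget absorbs the second round, and a $\varphi_k$ that is itself $\exists^{k+1}\forall^*$. The derivation of the $\glt{k}$ strengthening from hereditariness is also correct. However, the proposal defers precisely the step that carries all the mathematical content, and the one concrete suggestion you do make for that step points in the wrong direction. The crux of the theorem is to exhibit a single FO sentence that is hereditary over all finite structures yet not $\exists^k\forall^*$-definable; your mechanism for hereditariness --- a disjunction whose second disjunct is a \emph{universal} ``small-substructure'' clause catching all proper substructures of models of the first disjunct --- is not what the paper does, and it is not clear it can be made to work, since the class of proper substructures of the intended models is not obviously universal-definable. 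The paper instead uses the Tait/Gurevich--Alechina device: over $\tau=\{\leq,S,P,c,d\}$ it takes $\varphi_k$ to be (universal axioms saying $\leq$ is a linear order with minimum $c$, maximum $d$, and $S$ contained in the successor relation) conjoined with $\neg(\xi_4\wedge\xi_5)$, where $\xi_4$ says every non-$d$ element has an $S$-successor and $\xi_5$ says $|P|\le k$. Hereditariness comes from the fact that any \emph{proper} extension of a finite model of the universal axioms plus $\xi_4$ must violate one of the universal axioms (a new element either falls between two $S$-related points, killing $\xi_3$, or falls outside $[c,d]$, killing $\xi_2$); the ``catch-all'' disjunct here, $\alpha\wedge\neg\xi_4$, is $\exists\forall^*$, not universal. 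Without this idea (or an equivalent one) your gadgets cannot be designed, so this is a genuine gap rather than a routine verification.

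A second, smaller divergence: you posit a fixed $\mf{B}_n$ with only $k$ gadgets plus ``ambient scaffolding'', which forces Duplicator to make a nontrivial first-round response and to argue indistinguishability of structures with different universes. The paper avoids this entirely by letting $\mf{A}$ be a linear order of length $(8n+1)(k+1)$ with one $P$-point in the middle of each of $k+1$ blocks, and defining $\mf{B}$ to be \emph{the same structure} with the single $P$-point deleted from a block avoided by Spoiler's witnesses $a_1,\dots,a_k$. Duplicator's first-round answer is then the identity, and the second round reduces to a concrete combinatorial lemma: any $n$-tuple of $\mf{B}$ can be matched in $\mf{A}$ by fixing all maximal contiguous segments that touch the outside of the critical block and sliding the remaining segments to positions $(8n+1)i^*+n+1$ through $(8n+1)i^*+3n+1$, which avoid the deleted $P$-point at offset $4n+1$ while preserving order, successor, and the relations to $c$, $d$ and the $a_j$. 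If you adopt the paper's sentence and this same-universe trick, your game-theoretic outline does complete to a proof; as written, the proposal is a correct strategy with its essential construction missing.
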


\begin{proof}
The second part of the theorem follows from the first part since a
sentence that is hereditary over $\cl{S}$ is also $k$-hereditary over
$\cl{S}$ for each $k \ge 0$. We now prove the first part of the
theorem.  Consider the vocabulary $\tau = \{\leq, S, P, c, d\}$ where
$\leq$ and $S$ are binary relation symbols, $P$ is a unary relation
symbol, and $c$ and $d$ are constant symbols. The sentence $\varphi_k$
is constructed along the lines of the counterxample to the {\lt}
theorem in the finite as given in~\cite{gurevich-alechina}.
\[
\begin{array}{lll}
\varphi_k & := & (\xi_1 \wedge \xi_2 \wedge \xi_3) \wedge \neg (\xi_4 \wedge \xi_5) \\
\xi_1 & := & \text{``}\leq~\text{is a linear order~''} \\
\xi_2 & := & \text{``}~c~\text{is minimum under}~\leq~\text{and}~d~\text{is maximum under}~\leq~\text{"}\\
\xi_3 & := & \forall x \forall y ~S(x, y) \rightarrow \text{``}~y~\text{is the successor of}~x~\text{under}~\leq~\text{"}\\
\xi_4 & := & \forall x ~(x \neq d) \rightarrow \exists y S(x, y)\\
\xi_5 & := & \text{``~There exist at most}~k~\text{elements in (the set interpreting)}~P~\text{"}\\
\end{array}
\]

Each of $\xi_1, \xi_2, \xi_3$ and $\xi_5$ can be expressed using a
universal sentence. In particular, $\xi_1$ and $\xi_3$ can be
expressed using a $\forall^3$ sentence each, $\xi_2$ using a $\forall$
sentence, and $\xi_5$ using a $\forall^{k+1}$ sentence. Then
$\varphi_{k}$ is equivalent to an $\exists^{k+1}\forall^3$ sentence.

We first show that $\varphi_k$ is hereditary over $\cl{S}$, by showing
that $\psi_k := \neg \varphi_k$ is extension closed over $\cl{S}$. Let
$\mf{A} \models \psi_k$ and $\mf{A} \subseteq \mf{B}$. If $\alpha :=
(\xi_1 \,\wedge\, \xi_2 \wedge\, \xi_3)$ is such that $\mf{A} \models
\neg \alpha$, then since $\neg \alpha$ is equivalent to an existential
sentence, we have $\mf{B} \models \neg \alpha$; then $\mf{B} \models
\psi_k$. Else, $\mf{A} \models \alpha \wedge \xi_4$. Let $b$ be an
element of $\mf{B}$ that is not in $\mf{A}$. Then there are two cases,
in both which we show that $\mf{B} \models \psi_k$, as below:
\begin{enumerate}[nosep]
\item \vspace{2pt} $(\mf{B}, a_1, b, a_2) \models ((x \leq y) \wedge
  (y \leq z))$ for two elements $a_1, a_2$ of $\mf{A}$ such that
  $(\mf{A}, a_1, a_2) \models S(x, z)$; then $\mf{B} \models \neg
  \xi_3$ and hence $\mf{B} \models \psi_k$.
\item \vspace{2pt} $(\mf{B}, b) \models ((d \leq x) \vee (x \leq
  c))$. Since the interpretations of $c, d$ in $\mf{B}$ are resp.\ the
  same as those of $c, d$ in $\mf{A}$, we have $\mf{B} \models \neg
  \xi_2$ and hence $\mf{B} \models \psi_k$.
\end{enumerate}

We now show that $\varphi_k$ is not equivalent over $\cl{S}$ to any
$\exists^k \forall^*$ sentence. Towards a contradiction, suppose
$\varphi_k$ is equivalent over $\cl{S}$ to the sentence $\gamma :=
\exists x_1 \ldots \exists x_k$ $\forall^n \bar{y} \beta(x_1, \ldots,
x_k, \bar{y})$, where $\beta$ is a quantifier-free formula.  Consider
the structure $\mf{A} = (\mathsf{U}_{\mf{A}}, \leq^{\mf{A}},
S^{\mf{A}}, P^{\mf{A}}, c^{\mf{A}}, d^{\mf{A}})$, where the universe
$\mathsf{U}_{\mf{A}} = \{1, \ldots, (8n+1)\times (k+1)\}$,
$\leq^{\mf{A}}$ and $S^{\mf{A}}$ are respectively the usual linear
order and successor relation on $\mathsf{U}_{\mf{A}}$, $c^{\mf{A}} =
1, d^{\mf{A}} = (8n+1) \times (k+1)$ and $P^{\mf{A}} = \{(4n+1) + i
\times (8n +1) \mid i \in \{0, \ldots, k\} \}$. We see that $\mf{A}
\models (\xi_1 \wedge \xi_2 \wedge \xi_3 \wedge \xi_4 \wedge \neg
\xi_5)$ and hence $\mf{A} \models \varphi_k$. Then $\mf{A} \models
\gamma$. Let $a_1, \ldots, a_k$ be the witnesses in $\mf{A}$ to the
$k$ existential quantifiers of $\gamma$.

It is clear that there exists $i^* \in \{0, \ldots, k\}$ such that
$a_j$ does not belong to $\{(8n+1)\times i^* + 1, \ldots, (8n+1)\times
(i^*+1) \}$ for each $j \in \{1, \ldots, k\}$.  Then consider the
structure $\mf{B}$ that is identical to $\mf{A}$ except that
$P^{\mf{B}} = P^{\mf{A}} \setminus \{(4n+1) + i^* \times (8n+1)\}$.
It is clear from the definition of $\mf{B}$ that $\mf{B} \models
(\xi_1 \wedge \xi_2 \wedge \xi_3 \wedge \xi_4 \wedge \xi_5)$ and hence
$\mf{B} \models \neg \varphi_k$. We now show a contradiction by
showing that $\mf{B} \models \gamma$.

We show that $\mf{B} \models \gamma$ by showing that $(\mf{B}, a_1,
\ldots, a_k) \models \forall^n \bar{y} \beta(x_1, \ldots, x_k,
\bar{y})$.  This is in turn done by showing that for any $n$-tuple
$\bar{e} = (e_1, \ldots, e_n)$ from $\mf{B}$, there exists an
$n$-tuple $\bar{f} = (f_1, \ldots, f_n)$ from $\mf{A}$ such that the
(partial) map $\rho: \mf{B} \rightarrow \mf{A}$ given by $\rho(1) =
1$, $\rho((8n+1)\times (k+1)) = (8n+1)\times (k+1)$, $\rho(a_j) = a_j$
for $j \in \{1, \ldots, k\}$ and $\rho(e_j) = f_j$ for $j \in \{1,
\ldots, n\}$ is such that $\rho$ is a partial isomorphism from
$\mf{B}$ to $\mf{A}$. Then since $(\mf{A}, a_1, \ldots, a_k) \models
\forall^n \bar{y} \beta(x_1, \ldots, x_k, \bar{y})$, we have $(\mf{A},
a_1, \ldots, a_k, \bar{f}) \models \beta(x_1, \ldots, x_k, \bar{y})$
whereby $(\mf{B}, a_1, \ldots, a_k, \bar{e}) \models \beta(x_1,
\ldots, x_k, \bar{y})$. As $\bar{e}$ is an arbitrary $n$-tuple from
$\mf{B}$, we have $(\mf{B}, a_1, \ldots, a_k) \models \forall^n
\bar{y} \beta(x_1, \ldots, x_k, \bar{y})$.

Define a \emph{contiguous segment in $\mf{B}$} to be a set of $l$
distinct elements of $\mf{B}$, for some $l \ge 1$, that are contiguous
w.r.t. the linear ordering in $\mf{B}$. That is, if $b_1, \ldots, b_l$
are the distinct elements of the aforesaid contiguous segment, then
$(b_j, b_{j+1}) \in S^{\mf{B}}$ for $j \in \{1, \ldots, l-1\}$. We
represent such a contiguous segment as $\left[b_1, b_l\right]$, and
view it as an interval in $\mf{B}$.  Given an $n$-tuple $\bar{e}$ from
$\mf{B}$, a \emph{contiguous segment of $\bar{e}$ in $\mf{B}$} is a
contiguous segment in $\mf{B}$, all of whose elements belong to (the
set underlying) $\bar{e}$.  A \emph{maximal contiguous segment of
  $\bar{e}$ in $\mf{B}$} is a contiguous segment of $\bar{e}$ in
$\mf{B}$ that is not strictly contained in another contiguous segment
of $\bar{e}$ in $\mf{B}$. Let $\mathsf{CS}$ be the set of all maximal
contiguous segments of $\bar{e}$ in $\mf{B}$. Let $\mathsf{CS}_1
\subseteq \mathsf{CS}$ be the set of all those segments of
$\mathsf{CS}$ that have an intersection with the set $\{1, \ldots,
(8n+1)\times i^*\} \cup \{ (8n+1)\times (i^*+1) + 1, \ldots,
(8n+1)\times(k+1) \}$. Let $\mathsf{CS}_2 = \mathsf{CS} \setminus
\mathsf{CS}_1$.  Then all intervals in $\mathsf{CS}_2$ are contained
in the interval $\left[(8n+1) \times i^* + 1, (8n+1) \times (i^* + 1)
  \right]$.  Let $\mathsf{CS}_2 = \{\left[i_1, j_1\right], \left[i_2,
  j_2\right] \ldots, \left[i_r, j_r\right]\}$ such that $i_1 \leq j_1
< i_2 \leq j_2 < \ldots < i_r \leq j_r$.  Observe that $r \leq n$.
Let $\mathsf{CS}_3$ be the set of contiguous segments in $\mf{A}$
defined as $\mathsf{CS}_3 = \{\left[i_1', j_1'\right], \left[i_2',
  j_2'\right], \ldots, \left[i_r', j_r'\right]\}$ where $i_1' =
(8n+1)\times i^* + n + 1, j_1' = i_1' + (j_1 - i_1)$, and for $2 \leq
l \leq r$, we have $i_l' = j_{l-1}' + 2$ and $j_l' = i_l' + (j_l -
i_l)$. Observe that the sum of the lengths of the segments of
$\mathsf{CS}_2$ is at most $n$, so that $j_r' \leq (8n+1)\times i^* +
3n + 1$.

Now consider the tuple $\bar{f} = (f_1, \ldots, f_n)$ defined using
$\bar{e} = (e_1, \ldots, e_n)$ as follows. Let
$\text{Elements}(\mathsf{CS}_1)$,
resp. $\text{Elements}(\mathsf{CS}_2)$, denote the elements contained
in the segments of $\mathsf{CS}_1$, resp. $\mathsf{CS}_2$.  For $1
\leq l \leq n$, if $e_l \in \text{Elements}(\mathsf{CS}_1)$, then $f_l
= e_l$. Else suppose $e_l$ belongs to the segment $\left[i_s,
  j_s\right]$ of $\mathsf{CS}_2$ where $1 \leq s \leq r$, and suppose
that $e_l = i_s + t$ for some $t \in \{0, \ldots, (j_s - i_s)\}$. Then
choose $f_l = i_s' + t$.  We now verify that the (partial) map $\rho:
\mf{B}$ $ \rightarrow \mf{A}$ that is given by $\rho(1) = 1$,
$\rho((8n+1)\times (k+1)) = (8n+1)\times (k+1)$, $\rho(a_j) = a_j$ for
$j \in \{1, \ldots, k\}$ and $\rho(e_l) = f_l$ for $l \in \{1, \ldots,
n\}$, is indeed a partial isomorphism from $\mf{B}$ to $\mf{A}$.
\end{proof}

\section{Conclusion and future directions}\label{section:conclusion}

In this paper, we presented a new proof of the extensional form of the
generalized {\lt} theorem ($\glt{k}$) for theories first shown
in~\cite{abhisekh-apal}, and thereby obtained a new proof of the
theorem for sentences in both its forms substructural and
extensional. Our proof avoids using $\lambda$-saturation as used
in~\cite{abhisekh-apal}, and instead constructs structures with just
the ``needed saturation'' to prove the theorem. As our second result,
we presented a strengthening of the failure of the {\lt} theorem in
the finite by showing that not only universal sentences, but even
$\exists^k \forall^*$ sentences for any fixed $k$ are not expressive
enough to capture the semantic property of hereditariness in the
finite.

We now mention two future directions concerning our results. The first
is in connection with the {\lt} theorem in the finite. The
counterexample to this theorem in the finite as presented
in~\cite{gurevich-alechina} uses two binary relations and two
constants. But what happens if the vocabulary contains only one binary
relation and some constants/unary relations? There are positive
results shown when the binary relation is constrained to be
interpreted as special kinds of posets, specifically linear orders or
(more generally) poset-theoretic trees, or special kinds of graphs,
specifically subclasses of bounded clique-width graphs such as classes
of bounded tree-depth/shrub-depth and $m$-partite
cographs~\cite{abhisekh-thesis-short}. (In fact, over all these
classes, even $\glt{k}$ is true for all $k$.) But the case of an
unconstrained binary relation remains open, motivating the following
question.\footnote{We remark that special cases
  of~\ref{problem:los-tarski-finite} for different fragments of FO
  were studied in~\cite{psc(b)-arxiv-may-2012} and the {\lt} theorem
  was verified to be true for these cases over all finite structures,
  even for arbitrary finite relational vocabularies.}

\begin{problem}\label{problem:los-tarski-finite}
Is the (relativized version of the) {\lt} theorem true over all finite
colored directed graphs? The same question also for undirected graphs.
\end{problem}

Our second future direction concerns $\glt{k}$ over all finite
structures. Theorem~\ref{prop:strong-failure-of-LT-in-the-finite}
exhibits for each $k$, a sentence $\varphi_k$ that is hereditary over
all finite structures but that is not equivalent over this class to
any $\exists^k \forall^*$ sentence. We however observe that
$\varphi_k$ is itself equivalent to an $\exists^{k+1} \forall^*$
sentence. So that this counterexample to $\glt{k}$ is not a
counterexample to $\glt{k+1}$. This raises the natural question of
whether all counterexamples to $\glt{k}$ in the finite, are simply
$\Sigma^0_2$ sentences, or sentences equivalent to these. Given that
any $\Sigma^0_2$ sentence is $k$-hereditary for some $k$, we pose the
aforesaid question as the following problem.
\begin{problem}\label{problem:glt-k-finite}
Is it the case that over the class of all finite structures, a
sentence is $k$-hereditary for some $k$ if, and only if, it is
equivalent to a $\Sigma^0_2$ sentence?
\end{problem}

Observe that the version of Problem~\ref{problem:glt-k-finite} in
which arbitrary structures are considered instead of finite
structures, has a positive answer due to Theorem~\ref{theorem:glt-k}
(which is a stronger statement). Much like the {\lt} theorem, results
from classical model theory almost invariably fail in the
finite~\cite{gurevich-alechina}. Resolving
Problem~\ref{problem:glt-k-finite} in the affirmative would then give
us a preservation theorem that survives passage to all finite
structures.


\bibliographystyle{plain}

\bibliography{Bibfiles/model-theory,Bibfiles/finite-model-theory,Bibfiles/databases-and-verification,Bibfiles/automata,Bibfiles/logic-over-special-graph-classes,Bibfiles/structural-graph-theory,Bibfiles/algorithms,Bibfiles/misc,Bibfiles/self}




\end{document}